\documentclass[sigconf]{acmart}

\copyrightyear{2021}
\acmYear{2021}
\setcopyright{acmcopyright}
\acmConference[PODS '21] {Proceedings of the 40th ACM SIGMOD-SIGACT-SIGAI Symposium on Principles of Database Systems}{June 20--25, 2021}{Virtual Event, China}
\acmBooktitle{Proceedings of the 40th ACM SIGMOD-SIGACT-SIGAI Symposium on Principles of Database Systems (PODS '21), June 20--25, 2021, Virtual Event, China}
\acmPrice{15.00}
\acmISBN{978-1-4503-8381-3/21/06}
\acmDOI{10.1145/3452021.3458333}

\usepackage{amsmath,amsthm, color, tikz}
\usepackage{url}
\usepackage{tcolorbox}
\usepackage{thmtools} 
\usepackage{bbm}
\usepackage{dsfont}

\usepackage{url}
\usetikzlibrary{positioning}
\usetikzlibrary{shapes}
\usetikzlibrary{decorations.pathreplacing}
\usepackage{enumerate, enumitem} 
\usepackage{hyperref} 
\usepackage{algorithm}
\usepackage{algpseudocode}

\usepackage{mathtools}

\DeclarePairedDelimiter\floor{\lfloor}{\rfloor}

\hypersetup{
  colorlinks=true,      % colored text instead of boxed links
  pdfborder={0 0 0},    % belt-and-suspenders: no borders even if colorlinks is off
  linkcolor={blue},     % internal links (sections, theorems, refs)
  citecolor={blue},     % bibliography citations
  urlcolor={blue},      % external URLs
  breaklinks=true,      % allow links (URLs) to break across lines
}

\newcommand{\mathify}[1]{\ifmmode{#1}\else\mbox{$#1$}\fi}

\newcommand{\eps}{\varepsilon}

\newcommand{\ignore}[1]{}

\newcommand{\satisfying}[1]{\ensuremath{\mathsf{Sol}(#1)}}

\newcommand{\Cnt}[2]{\ensuremath{\mathsf{Cnt}_{\langle #1, #2 \rangle}}}
\newcommand{\good}{\mathsf{Good}}
\newcommand{\bad}{\mathsf{Bad}}

\newcommand{\hybrid}{\ensuremath{\mathsf{APS\mbox{-}Estimator}}}
\newcommand{\naive}{\ensuremath{\mathsf{Naive}}}
\newcommand{\hashing}{\ensuremath{\mathsf{HashingEstimator}}}

\newcommand{\Threshold}{\mathsf{thr}_0}
\newcommand{\Fail}{\mathsf{Fail}_{\mathsf{Sample}}}

\newcommand{\DistinctSample}{\ensuremath{\mathsf{DistinctSample}}}

\newcommand{\thresh}{\mathsf{thr}_1}

\newcommand{\Cov}{\mathsf{Cov} }

\newcommand{\streamsize}{M}
\newcommand{\universe}{\Omega_n}

\newcommand{\Hteop}{\ensuremath{\mathsf{H_{Teop}}}}

\newcommand{\range}{\ensuremath{\mathsf{Range}}}
\newcommand{\vol}{\ensuremath{\mathsf{Volume}}}

\newtheorem{theorem}{Theorem}[section]

\newtheorem{remark}[theorem]{Remark}
\newtheorem{definition}[theorem]{Definition}
\newtheorem{lemma}[theorem]{Lemma}
\newtheorem{claim}[theorem]{Claim}

\newtheorem{observation}[theorem]{Observation}

\newtheorem{problem}[theorem]{Problem}

\usepackage{bbm} %
\usepackage{mathtools}

\algnotext{EndFor}
\algnotext{EndIf}
\algnotext{EndWhile}
\algblockdefx{Do}{EndDo}{\textbf{do}\;}{}
\algnotext{EndDo}
\algnewcommand\algorithmicswitch{\textbf{switch}}
\algnewcommand\algorithmiccase{\textbf{case}}
\algdef{SE}[SWITCH]{Switch}{EndSwitch}[1]{\algorithmicswitch\ #1\ \algorithmicdo}{\algorithmicend\ \algorithmicswitch}%
\algdef{SE}[CASE]{Case}{EndCase}[1]{\algorithmiccase\ #1}{\algorithmicend\ \algorithmiccase}%
\algtext*{EndSwitch}%
\algtext*{EndCase}%

\newcommand{\rev}[1]{{\color{black}#1}}
\usepackage[colorinlistoftodos,prependcaption,textsize=tiny]{todonotes}

\title{Estimating Size of the Union of Sets in Streaming Model}
\titlenote{This is a significantly revised version of the paper that appeared in the 
proceedings of the 40th ACM SIGMOD-SIGACT-SIGAI Symposium on Principles of 
Database Systems (PODS-21). The main claims of the paper remain unchanged; 
however, we have fixed several typos and bugs in the proofs. The proofs and theorem
statements have also been formalized in Lean. 
The Lean formalization is available at 
\protect\url{https://github.com/meelgroup/delphic21lean}
}\titlenote{
The authors decided to forgo the old
convention of alphabetical ordering of
authors in favor of a randomized ordering,
denoted by \textcircled{r}. The publicly
verifiable record of the randomization is
available at \protect\url{https://www.aeaweb.org/journals/policies/random-author-order/search
}
}

\author{Kuldeep  S. Meel \textcircled{r} }
\affiliation{%
	\institution{National University of Singapore}
}

\author{N. V. Vinodchandran \textcircled{r}}
\affiliation{
	\institution{University of Nebraska, Lincoln}
}

\author{Sourav Chakraborty}
\affiliation{%
	\institution{Indian Statistical Institute, Kolkata}
}
\begin{document}
\fancyhead{}

\begin{abstract}
In this paper we study the problem of estimating the size of the union of sets
$S_1, \dots, S_M$, where each set $S_i \subseteq \Omega$ (for some discrete
universe $\Omega$) is implicitly presented and comes in a streaming fashion.
We define the notion of Delphic sets to capture a class of streaming problems
where membership, sampling, and counting calls to the sets are efficient. In
particular, we show that our notion of Delphic sets captures three well-known
problems: Klee's measure problem (discrete version), test coverage estimation,
and model counting of DNF formulas.
The Klee's measure problem is the problem of computing the volume of a union of
multi-dimensional axis-aligned rectangles. The test coverage estimation problem,
in the context of combinatorial testing, focuses on the computation of the
coverage measure for a given testing array. Combinatorial testing is a
fundamental technique in hardware and software testing. Finally, given a DNF
formula $\varphi = D_1 \vee D_2 \vee \ldots \vee D_M$, the model counting problem
seeks to compute the number of satisfying assignments of $\varphi$.

The primary contribution of our work is a simple and efficient sampling-based
algorithm, called {\hybrid}, for outputting an $(\varepsilon,\delta)$-approximation
of the cardinality of the union of (Delphic) sets in the streaming setting. Our
algorithm has space complexity $O(R\log |\Omega|)$ and update time complexity
$O(R\log R \cdot \log(M/\delta) \cdot \log|\Omega|)$, where
$R = O\!\left(\log(M/\delta)\cdot \varepsilon^{-2}\right)$.
Consequently, for the streaming Klee's measure problem, our algorithm provides
the first algorithm with update time complexity that linearly depends on the
dimension $d$ for $d>1$. This settles an open problem of Tirthapura and Woodruff
(PODS~12). Furthermore, a straightforward application of our algorithm results
in efficient algorithms for the coverage estimation problem and the DNF model
counting problem in the streaming setting. 

We then investigate whether the space complexity for coverage estimation can be
further improved, and in this context, we present another streaming algorithm
that uses near-optimal space complexity but uses an update algorithm that is in
${\rm P}^{\rm NP}$, thereby showcasing an interesting time--space trade-off in
the streaming setting.

It is worth remarking that we view a key strength of our work as the simplicity
of both the algorithm and its theoretical analysis, which makes it amenable to
practical implementation and easy adoption.

\end{abstract}

\maketitle

\section{Introduction}\label{sec:intro}

Estimating the size of the union of sets is a fundamental problem in computer
science. The goal, usually, is to design an efficient randomized algorithm that
can output an $(\varepsilon, \delta)$-approximation of the size of the union of
sets. We say that a random variable $Z$ is an $(\varepsilon,\delta)$-approximation
of $Y$ if $\Pr[|Z-Y| \leq \varepsilon |Y|] \geq 1-\delta$.

In this paper, we focus on estimating the union of sets in a streaming setting.
We consider a family of sets, which we call {\em Delphic sets}
(see Definition~\ref{def:delphic}), for which membership, sampling, and counting
queries can be implemented {\em efficiently}. To showcase the generality of
Delphic sets, we first present three problems arising in diverse domains that can
be captured by Delphic sets: Klee's measure problem, test coverage estimation,
and model counting for DNF formulas. The three problems are defined as follows:

\subsection*{Klee's Measure Problem}
We define the discrete version of Klee's measure problem (KMP) in the streaming
setting.
\begin{definition}\label{def:rectangle}
A $d$-dimensional axis-aligned rectangle $\mathbf{r}$ over a universe
$[\Delta]^d$, where $\Delta$ is a totally ordered discrete set, is
defined as $[a_1,b_1] \times [a_2,b_2] \times \ldots \times [a_d, b_d]$. Given a
rectangle $\mathbf{r}$, let $\range(\mathbf{r})$ denote the set of tuples
$\{(x_1,\ldots,x_d)\}$ where $a_i \leq x_i \leq b_i$ and $x_i$ is an integer.
Note that every $d$-dimensional rectangle can be succinctly represented by the
tuple $(a_1, b_1, \cdots, a_d, b_d)$. The streaming Klee's measure problem is the
following: given a stream $\mathcal{R}$ of size $\streamsize$ such that
$\mathcal{R} = \langle \mathbf{r}_1, \mathbf{r}_2, \cdots, \mathbf{r}_{\streamsize} \rangle$,
where each item $\mathbf{r}_i$ is a $d$-dimensional rectangle, output an
$(\varepsilon, \delta)$-approximation of the volume of $\mathcal{R}$, where
\begin{align*}
    \vol(\mathcal{R}) = \left| \bigcup_{1 \leq i \leq \streamsize} \range(\mathbf{r}_i) \right|.
\end{align*}
\end{definition}

\subsubsection*{Motivation}
Klee's measure problem (KMP) is well investigated in computational geometry.
Klee, in 1977, introduced the one-dimensional version of the problem over the
reals: given $n$ intervals in $\mathbb{R}$, compute the size of their
union~\cite{klee1977can}. Klee presented an $O(n\log n)$-time algorithm for the
problem in the traditional RAM model, which was later proved to be optimal by
Fredman and Weide~\cite{fredman1978complexity}. Since its introduction, KMP has
been studied extensively in computational geometry with the goal of designing
efficient algorithms in the traditional RAM model. As certain data objects in
databases can be represented by axis-parallel multi-dimensional rectangles, the
KMP problem is significant in databases~\cite{cai2000parametric,
lazaridis2001progressive,papadias2001efficient,tao2004range,zhang2008computing}.
In addition to computer science areas, algorithms for KMP have recently found
applications in a wide range of practical areas, including environmental
chemistry~\cite{CPGJ19} and lunar archaeology~\cite{BAMLM20}.

The discrete version of KMP that we consider in this paper is studied in the
streaming community as {\em range-efficient $\mathbb{F}_0$}
computation~\cite{PT07,TW12}. Other than its natural appeal, this problem is
interesting because several significant problems, including max-dominance
norm~\cite{cormode2003estimating}, counting triangles in
graphs~\cite{BKS02}, and the distinct summation
problem~\cite{considine2004approximate}, can be reduced to computing
$\mathbb{F}_0$ over multi-dimensional discrete ranges.

\subsection*{Test Coverage Estimation Problem}

\begin{definition}\label{def:cov}
For an $n$-bit binary string $\mathbf{a} = a_1a_2\cdots a_n \in \{0,1\}^n$, its
{\em $t$-coverage}, denoted by $\Cov_{t}(\mathbf{a})$, is defined as
\begin{align*}
 \Cov_{t}(\mathbf{a}) = & \left\{(T,\mathbf{y}) \ \mid\ T \subseteq [n], |T| = t,
 \mathbf{y} \in \{0,1\}^t \text{ and } \right. \\
 & \left. \text{ the restriction of } \mathbf{a} \text{ to indices in } T
 \text{ gives } \mathbf{y} \right\}.
\end{align*}
The streaming coverage estimation problem is defined as follows. Given a stream
$\mathcal{A}$ of size $\streamsize$ such that
$\mathcal{A} = \langle \mathbf{a}_1, \dots, \mathbf{a}_{\streamsize} \rangle$, where
$\mathbf{a}_i \in \{0,1\}^n$, output an $(\varepsilon, \delta)$-approximation of
$|\Cov_t(\mathcal{A})|$ for any given $t$ ($1 \leq t \leq n$), where
$\Cov_t(\mathcal{A}) = \bigcup_{1 \leq i \leq \streamsize} \Cov_t(\mathbf{a}_i)$.
\end{definition}

\subsubsection*{Motivation}
Over the past half-century, the widespread adoption of software in diverse areas
has necessitated the design of highly configurable systems wherein the end-user
can choose a configuration of interest by setting the desired values of the
configuration parameters. The space of configurations is often astronomical.
For example, the possible number of configuration options of an embedded Linux
for microcontrollers is $7.7 \times 10^{417}$~\cite{pettproduct}. Since it is not
feasible to examine the behavior of a System Under Test (SUT) for all possible
configurations, combinatorial testing has emerged as a dominant
paradigm~\cite{KKL13}. A configuration is specified by assigning values to the
configuration parameters\footnote{While techniques developed in our work extend
to scenarios wherein each of the parameters takes values in finite domains, for
simplicity of exposition, we will focus on the case where each of the parameters
takes a binary value.}. Motivated by the observation that most often the errors in
system results are due to the interaction of a relatively small number of
parameters, the combinatorial testing paradigm focuses on covering
$t$-combinations of parameters for a given $t$. Note that when each of the
parameters takes binary values, the total number of possible $t$-combinations of
parameters over a test of size $n$ is $\binom{n}{t} 2^t$.

In the critical area of software testing, there has been a long line of work on
the design of test suite generators~\cite{mandl1985orthogonal,
tatsumi1987test,cohen1997aetg,bryce2009density,nie2011survey,
thum2014classification,medeiros2016comparison}. The earliest works focused on the
design of the smallest test suites such that all the $\binom{n}{t} 2^t$
combinations are covered. The optimal constructions still require test suites of
size $\mathcal{O}(t2^{t} \log n)$, which is intractable for large enough $t$.
Consequently, one is often interested in achieving as high a coverage as possible
within a given budget. Moreover, such generators are heuristic-based and fail to
provide any rigorous guarantees on the quality of their generated test suites. In
this context, it is critical to rigorously estimate the coverage of a given test
suite. Note that in the context of binary parameters, a test can be specified as
a binary string $\mathbf{a} = a_1 a_2 \ldots a_n \in \{0,1\}^n$. For many practical
systems, $n$, the number of configurable parameters, is very large. Therefore,
it is not practical to store all the tests. In addition, it is useful to have
estimation methods that deal with a growing test suite. In particular, from a
practical perspective, one envisions a coverage estimator to be a {\em monitor}
with as small a resource overhead as possible. These issues lend themselves to
investigating the coverage estimation problem in a data streaming framework.

\subsection*{Model Counting for DNF}
\begin{definition}
Let $X$ be a set of $n$ Boolean variables. A literal is a variable or its
negation. A term is a conjunction of literals. A DNF formula $\varphi$ over $X$
is a disjunction of terms. For a DNF formula $\varphi$ over $n$ variables, let
$\satisfying{\varphi}$ denote the set of satisfying assignments of $\varphi$.
The streaming DNF model counting problem is the following: given a stream
$\Phi = \langle D_1, D_2, \dots, D_{\streamsize} \rangle$, where $D_i$ is a term over $n$
variables, output an $(\varepsilon, \delta)$-approximation of
$|\satisfying{\varphi}|$, where $\varphi$ is the DNF formula
$\varphi = D_1 \vee D_2 \vee \ldots \vee D_{\streamsize}$.
\end{definition}

\subsubsection*{Motivation}
\#DNF (model counting for DNF, also referred to as DNF counting, and often denoted
by \#DNF in the literature) is a fundamental problem in computer science with a
wide variety of applications. Dalvi and Suciu~\cite{DS07} showed that queries in
probabilistic databases reduce to \#DNF. Another important application of \#DNF
arises from the domain of network unreliability: given a graph $G = (V, E)$,
wherein each edge $e_i$ fails with probability $p_i$, we are interested in
computing the probability that $s$ and $t$ are disconnected. Karger's seminal
work reduces the computation of network unreliability to \#DNF, wherein each
term represents a min-cut~\cite{K01}. The past few years have witnessed a surge
in interest in designing efficient FPRAS techniques for
\#DNF~\cite{MSV17, MSV18}.

\subsection{Our Results}
We define the notion of {\em Delphic sets} (Definition~\ref{def:delphic}) and
formulate the problem of estimating the size of the union of these sets in the
streaming setting (Problem~\ref{prob:delphic}). This generic framework captures
many union-of-sets size estimation problems, including all three problems
mentioned above. Then, in Theorem~\ref{thm:unknownM}, we present an efficient
algorithm for the problem. The efficiency of our algorithm is measured in terms
of its worst-case space complexity and its worst-case per-item update time.
Finally, we show how our algorithm for Delphic sets yields new efficient
algorithms for all three problems mentioned above. The algorithm for KMP solves
an open problem from the literature.

\subsection*{Delphic Sets as a Unifying Model}

Let $\{\Omega_n\}_{n\geq 1}$ be a family of discrete sets indexed by $n$.
We define the notion of a Delphic family over $\{\Omega_n\}_{n\geq 1}$.

\begin{definition}\label{def:delphic}
Let $\mathcal{F} = \{\mathcal{F}_n\}_{n\geq 1}$, where
$\mathcal{F}_n \subseteq 2^{\Omega_n}$. We call $\mathcal{F}$ a {\em Delphic
family} if the following holds: for every $n$ and for every set
$S \subseteq \Omega_n$, the following queries can be performed in
$O(\log |\Omega_n|)$ time.
\begin{enumerate}
    \item Determine the size of the set $S$;
    \item Draw a uniform random sample from $S$; and
    \item Given any $x$, check whether $x \in S$.
\end{enumerate}
\end{definition}

Next, we define the following streaming problem.

\begin{problem}\label{prob:delphic}
Fix a Delphic family $\mathcal{F}$. Given $n \in \mathbb{N}$, $\varepsilon \in (0,1]$, $\delta \in (0,1]$,
and a stream 
$\mathcal{S} = \langle S_1, S_2, \ldots, S_{\streamsize} \rangle$, wherein each $S_i$ belongs
to $\mathcal{F}_n$, the task is to output an
$(\varepsilon,\delta)$-approximation of
$\left| \bigcup_{i=1}^{\streamsize} S_i \right|$ assuming
access only to the three queries listed in Definition~\ref{def:delphic}.
\end{problem}

The main contribution of this work is a new adaptive sampling-based algorithm
for the above problem (Problem~\ref{prob:delphic}), as stated in the following
theorem.

\begin{restatable}{theorem}{smalltime}
\label{thm:unknownM}
There is a streaming algorithm, which we call $\hybrid$, that solves
Problem~\ref{prob:delphic}. The algorithm has worst-case space complexity
$O(R \log |\universe|)$ and update time
$O(R \log R \cdot \log(\streamsize/\delta) \cdot \log |\universe|)$, where
$R = O\left(\log(\streamsize/\delta)\cdot \varepsilon^{-2}\right)$.
\end{restatable}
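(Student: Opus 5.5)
We design $\hybrid$ as an adaptive-sampling scheme in the spirit of Karp--Luby--Madras and the later distinct-elements samplers, using only the three Delphic queries. The algorithm keeps a probability $p$, initially $1$, and a buffer $\mathcal{X}$ of elements of $\bigcup_{j\le i} S_j$. The invariant is that, conditioned on $p$, every element of the current union sits in $\mathcal{X}$ independently with probability $p$. Set a threshold $\thresh = R = C\log(\streamsize/\delta)\varepsilon^{-2}$.

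When $S_i$ arrives, we do three things. \emph{First}, we delete from $\mathcal{X}$ every $x$ with $x\in S_i$, using membership queries; these elements will be re-decided. \emph{Second}, we draw $N\sim\mathrm{Bin}(|S_i|,p)$ using the counting query, and insert $N$ distinct uniform samples of $S_i$. The samples are obtained by repeated sampling with rejection of duplicates, which is fine because $N\le O(R)$ with high probability. If $N$ is large, we abort or output $\bot$. \emph{Third}, while $|\mathcal{X}|\ge \thresh$, we subsample each element with probability $1/2$ and halve $p$. The output is $|\mathcal{X}|/p$.

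The invariant follows by induction on the stream. After the deletion step, elements outside $S_i$ retain independent probability $p$. For $S_i$, the binomial count followed by a uniform subset of that size is exactly independent $p$-inclusion on $S_i$. Halving preserves the invariant. There is one subtlety: the halving decision depends on the sample, so ``conditioned on $p$'' is not literally an independent law. The correct formulation is to couple with a fixed geometric ladder. For each level $k$, consider the ideal process $\mathcal{X}_k$ in which every element of the union is kept with probability $2^{-k}$. The actual $\mathcal{X}$ equals $\mathcal{X}_k$ whenever $p=2^{-k}$, provided the subsampling randomness is shared, which we arrange via nested coins.

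Correctness then follows from Chernoff bounds on the ideal processes. Let $U=|\bigcup S_j|$ and let $k^*$ be the level with $2^{-k^*}U\approx \thresh/4$. For every $k\le k^*$, $|\mathcal{X}_k|/2^{-k}$ is an $\varepsilon$-approximation of $U$ except with probability $e^{-\Omega(\varepsilon^2 \thresh)}$. Also, at time $i$ the level never exceeds $k^*+1$, since at level $k^*$ the sample is below $\thresh$ with high probability. A union bound over all $\streamsize$ time steps and $O(\log\streamsize)$ relevant levels gives failure probability $\delta$ once $C$ is large. The same bounds show $|\mathcal{X}|\le \thresh$ and $N=O(R)$ throughout, so the abort event has probability $\le\delta$.

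Complexity is direct. Space is $O(R)$ stored elements of $\log|\universe|$ bits. Per update we make $O(R)$ membership checks and $O(R)$ sampling calls, each costing $O(\log|\universe|)$. Removing duplicates in a sorted structure costs an extra $\log R$ factor. The rejection sampling and the Bernoulli draws with probability $p$ each cost $O(\log(\streamsize/\delta))$ bits of randomness. Together these give $O(R\log R\log(\streamsize/\delta)\log|\universe|)$.

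The main obstacle is the dependence issue in the correctness step: $p$ is a random variable correlated with the sample. I would handle it by the fixed-level coupling and a union bound over levels rather than by conditioning directly.
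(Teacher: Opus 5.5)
\textbf{Gap: the batch $N\sim\mathsf{Bin}(|S_i|,p)$ cannot be sampled before thinning.} You insert the $N$ samples \emph{before} thinning and abort when $N$ is large, on the grounds that $N=O(R)$ with high probability. That claim is false. The value of $p$ in force when $S_i$ arrives was calibrated to $|S_1\cup\cdots\cup S_{i-1}|$, and a single new set can be arbitrarily larger than the union so far. Already at $i=1$ you have $p=1$, hence $N=|S_1|$ deterministically. So any stream whose first set has size $\gg R$ (say $S_1=\Omega_n$, or a first rectangle covering all of $[\Delta]^d$) makes your algorithm abort with probability $1$. Your Chernoff bounds control $|\mathcal{X}_k|$ at the level reached \emph{after} $S_i$ is processed, not the size of a batch drawn at the stale level. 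Removing the abort does not rescue the theorem either: generating and storing $|S_i|$ distinct elements costs $\Theta(|S_i|)$ space and time in the worst case.

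The missing idea is to thin the \emph{count} before sampling, as in Algorithm~\ref{algo:final}. While $N_i+|\mathcal{X}|\ge\Threshold$, set $N_i\gets\mathsf{Bin}(N_i,1/2)$, thin $\mathcal{X}$, and halve $p$; only then draw $N_i$ distinct samples. The removal step makes new samples disjoint from $\mathcal{X}$, so this loop condition is exactly your post-insertion test $|\mathcal{X}|\ge\Threshold$. Independently thinning a uniform $N$-subset gives a uniform subset of size $\mathsf{Bin}(N,1/2)$. Hence the postponed version is distributed like your sample-then-thin process (the paper's comparison with \naive, Observation~\ref{obs:hybrid-vs-naive}), yet it guarantees $N_i<\Threshold$ and $|\mathcal{X}|<\Threshold$ deterministically. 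With that change, your nested-coin coupling to ideal sets is essentially the paper's $Y_j^{(i)}$ argument applied to \naive.

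\textbf{Two secondary repairs.} First, the union bound over ``$O(\log M)$ relevant levels'' with a uniform per-level bound $e^{-\Omega(\varepsilon^2\Threshold)}$ does not work. The level $k^*\approx\log_2(4U/\Threshold)$ can be of order $\log|\Omega_n|$ regardless of $M$, and that factor cannot be absorbed when the threshold is $\Theta(\varepsilon^{-2}\log(M/\delta))$. The paper instead notes that at level $j<j^*$ the mean is at least $2^{j^*-1-j}\Threshold/4$. The Chernoff bounds $4(\delta/48)^{2^{j^*-1-j}}$ therefore decay doubly exponentially and sum to at most $\delta/6$. Only the overshoot event $|Y^{(i)}_{j^*-1}|\ge\Threshold$ needs a union bound over the $M$ time steps. (With your nested coins, monotonicity $\mathcal{X}_{k+1}\subseteq\mathcal{X}_k$ would also collapse all low levels into a single event.)

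Second, your complexity accounting does not match the procedure. Collecting $N$ distinct elements by rejection needs $\Theta(N\log N)$ draws when $N=|S_i|$, which happens whenever $p=1$, not $O(R)$. An uncapped rejection loop has no worst-case bound at all. The $\log R\cdot\log(M/\delta)$ factor in the theorem comes from capping \DistinctSample{} at $1+N_i\log N_i\log(6M/\delta)$ draws. Its failure is then charged to the error budget via Theorem~\ref{thm:cc} and a union bound over the $M$ items. It does not come from a sorted structure or from random bits.
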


\subsection*{Klee's Measure Problem}
We first observe that Klee's measure problem can be formulated as Delphic coverage
by constructing the set $S_i = \range(\mathbf{r}_i)$ corresponding to each
$\mathbf{r}_i$ and observing that each such $S_i$ belongs to the Delphic family.
Therefore, the following corollary follows from
Theorem~\ref{thm:unknownM}.

\begin{restatable}{corollary}{kleecorr}
\label{corr:klee}
There is a streaming algorithm that given  a stream
$\mathcal{R} = \langle \mathbf{r}_1, \mathbf{r}_2, \cdots, \mathbf{r}_{\streamsize} \rangle$,
where each $\mathbf{r}_i$ is a $d$-dimensional rectangle over
$[\Delta]^d$, computes an $(\varepsilon,\delta)$-approximation of
$\vol(\mathcal{R})$. The algorithm has worst-case space
$O\left(d(\log \Delta) \cdot \log(\streamsize/\delta)\cdot \varepsilon^{-2}\right)$ and
worst-case update time complexity 
$$O\left(d(\log \Delta) \cdot (\log(\streamsize/\delta))^2 \cdot \right.\left. \log\log(\streamsize/\delta) \cdot \right.\left. 
\varepsilon^{-2}\log \varepsilon^{-1}\right).$$
\end{restatable}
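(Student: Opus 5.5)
The corollary will follow by instantiating Theorem~\ref{thm:unknownM} with the Delphic family of rectangle ranges. The universe is $\Omega = [\Delta]^d$, so $\log|\Omega| = d\log\Delta$. For each rectangle $\mathbf{r}_i = (a_1,b_1,\ldots,a_d,b_d)$ in the stream we take $S_i = \range(\mathbf{r}_i)$. Then $\bigcup_i S_i$ has size exactly $\vol(\mathcal{R})$. An $(\varepsilon,\delta)$-approximation of $|\bigcup_i S_i|$, as produced by $\hybrid$, is therefore an $(\varepsilon,\delta)$-approximation of $\vol(\mathcal{R})$.

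The main step is to check that $\{\range(\mathbf{r})\}$ forms a Delphic family per Definition~\ref{def:delphic}, with each query running in $O(d\log\Delta) = O(\log|\Omega|)$ time. Here we treat arithmetic on $O(\log \Delta)$-bit numbers as unit or bit-linear cost, consistent with how the paper measures time. The three queries are handled as follows.
\begin{enumerate}
\item \emph{Counting:} compute $|\range(\mathbf{r})| = \prod_{j=1}^d (b_j - a_j + 1)$. This product of $d$ factors produces a number of $O(d\log\Delta)$ bits.
\item \emph{Sampling:} choose each coordinate $x_j$ independently and uniformly from $[a_j,b_j]$. Because the range is a product set, the resulting tuple is uniform over $\range(\mathbf{r})$. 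Each coordinate takes $O(\log\Delta)$ random bits. If only random bits are available, exact uniformity on an interval whose length is not a power of two needs rejection sampling, giving expected $O(\log \Delta)$ time per coordinate. The alternative is to assume uniform sampling from an integer interval as a primitive. I expect this to be the only delicate point, and I would resolve it by adopting the same sampling model the paper uses for Delphic sets.
\item \emph{Membership:} given $x$, check $a_j \le x_j \le b_j$ for every $j$, which takes $O(d\log\Delta)$ time.
\end{enumerate}

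The final step is to substitute into the bounds of Theorem~\ref{thm:unknownM}, with $R = O(\log(M/\delta)\varepsilon^{-2})$ and $\log|\Omega| = d\log\Delta$.

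For space, we get $O(R\log|\Omega|) = O(d\log\Delta\cdot\log(M/\delta)\varepsilon^{-2})$.

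For update time, we get $O(R\log R\cdot\log(M/\delta)\cdot d\log\Delta)$. Expanding the $\log R$ factor gives
\[
\log R = O\bigl(\log\log(M/\delta) + \log\varepsilon^{-1}\bigr).
\]
This is bounded by $O(\log\log(M/\delta)\cdot\log\varepsilon^{-1})$ once both terms are at least a constant (the degenerate regimes are absorbed by the constant). Multiplying out then yields the stated bound
\[
O\bigl(d\log\Delta\cdot(\log(M/\delta))^2\cdot\log\log(M/\delta)\cdot\varepsilon^{-2}\log\varepsilon^{-1}\bigr).
\]
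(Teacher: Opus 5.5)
Your proposal is correct and follows the paper's proof: the paper likewise shows that $\range(\mathbf{r})$ is Delphic via the same three operations (product of side lengths, coordinate-wise uniform sampling, coordinate-wise interval checks), then invokes Theorem~\ref{thm:unknownM} with $\log|\Omega| = d\log\Delta$. Your treatment of $\log R$ is more explicit than the paper's, but replacing the sum $\log\log(\streamsize/\delta)+\log\varepsilon^{-1}$ by the product is not ``absorbed by the constant'' when $\varepsilon$ is near $1$ (there $\log\varepsilon^{-1}\to 0$); it tacitly needs $\varepsilon$ bounded away from $1$, a convention already implicit in the corollary's statement.
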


\begin{remark}\label{remark}
Corollary~\ref{corr:klee} provides the first efficient algorithm with linear
dependence on $d$ for Klee's measure problem in the streaming model. This resolves
an open problem of Tirthapura and Woodruff~\cite{TW12}. We remark that Tirthapura
and Woodruff claimed an algorithm for KMP with space and update time complexity
$O\left(d(\log \Delta) \cdot (\log \streamsize) \cdot \varepsilon^{-2} \right.$ $\left. \cdot
\log(\frac{1}{\delta})\right)$. However, they later retracted this
claim~\cite{Woodruff20}. Their method only yields per-item worst-case update time
complexity\\ $\mathrm{poly}((\log \Delta)^d, \varepsilon^{-1},
\log \frac{1}{\delta})$.
\end{remark}

\subsubsection*{\bfseries Multi-Dimensional Arithmetic Progression}
We then focus on a generalization of Klee's measure problem by generalizing the
notion of a range $[a_i,b_i]$ to arithmetic progressions, which was studied
previously by Pavan and Tirthapura~\cite{PT07} for $d=1$. Let $[a, b, c]$
represent the arithmetic progression with common difference $c$ in the range
$[a, b]$, i.e., $a, a+c, a+2c, \ldots, a + jc$, where $j$ is the largest integer
such that $a + jc \leq b$. Consider a stream
$\mathcal{R} = \langle r_1, r_2, \ldots, r_m \rangle$, wherein each
$r_i = [a_1, b_1, c_1] \times \cdots \times [a_d, b_d, c_d]$. We generalize
$\range(r_i)$ to denote the set of tuples $\{(x_1, \ldots, x_d)\}$ where
$a_i \leq x_i \leq b_i$ and $x_i = a_i + k \cdot c_i$ for some non-negative
integer $k$. Similarly,
$\vol(\mathcal{R}) = \left| \bigcup_{i=1}^{m} \range(r_i) \right|$. By observing
that $\range(r_i)$ for $d$-dimensional arithmetic progressions belongs to the
Delphic family, we obtain the following streaming algorithm.

\begin{restatable}{corollary}{kleearith}\label{corr:kleearith}
There is a streaming algorithm that, given any real numbers
$\varepsilon, \delta < 1$, and a stream
$\mathcal{R} = \langle \mathbf{r}_1, \mathbf{r}_2, \cdots, \mathbf{r}_{\streamsize} \rangle$
consisting of $d$-dimensional arithmetic progressions, computes an
$(\varepsilon,\delta)$-approximation of $\vol(\mathcal{R})$. The algorithm has
worst-case space complexity
$O\left(d(\log \Delta) \cdot \log(\streamsize/\delta)\cdot \varepsilon^{-2}\right)$ and
worst-case update time complexity
$O\left(d(\log \Delta) \cdot (\log(\streamsize/\delta))^2 \log\log(\streamsize/\delta) \cdot
\varepsilon^{-2}\log \varepsilon^{-1}\right).$
\end{restatable}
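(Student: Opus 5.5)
The plan is to obtain Corollary~\ref{corr:kleearith} from Theorem~\ref{thm:unknownM}, exactly as Corollary~\ref{corr:klee} was. The universe is $\Omega = [\Delta]^d$, so $\log|\Omega| = d\log\Delta$. The only real work is to check that the sets $\range(\mathbf{r})$, for $\mathbf{r} = [a_1,b_1,c_1]\times\cdots\times[a_d,b_d,c_d]$, form a Delphic family: each of the three queries of Definition~\ref{def:delphic} must run in $O(d\log\Delta)$ time. Once that holds, the stated bounds follow by substitution into the theorem.

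\textbf{Counting.} The set $\range(\mathbf{r})$ is a Cartesian product of one-dimensional progressions. The $i$-th progression has $n_i = \lfloor (b_i - a_i)/c_i \rfloor + 1$ elements when $a_i \le b_i$, and none otherwise. Hence $|\range(\mathbf{r})| = \prod_{i=1}^d n_i$. Each $n_i$ is a division of $O(\log\Delta)$-bit numbers. The product has $O(d\log\Delta)$ bits, and accumulating it costs $O(d\log\Delta)$ in the word/bit-operation model the paper uses for rectangles.

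\textbf{Sampling.} Uniform sampling from a product set is the same as sampling independently and uniformly in each coordinate. For coordinate $i$, draw $k_i$ uniform in $\{0,\dots,n_i-1\}$ using $O(\log\Delta)$ random bits and output $a_i + k_i c_i$. The cost is $O(d\log\Delta)$.

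\textbf{Membership.} A tuple $x$ belongs to $\range(\mathbf{r})$ iff, for every $i$, $a_i \le x_i \le b_i$ and $c_i \mid (x_i - a_i)$. This is one range comparison and one divisibility test per coordinate, again $O(d\log\Delta)$ in total.

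\textbf{Substitution.} Plugging $\log|\Omega| = d\log\Delta$ into Theorem~\ref{thm:unknownM} gives space $O(R\,d\log\Delta)$ with $R = O(\log(M/\delta)\varepsilon^{-2})$. For the update time, expand
\[
\log R = O\bigl(\log\log(M/\delta) + \log\varepsilon^{-1}\bigr).
\]
Then the theorem's update time is $O\bigl(R\log R\cdot\log(M/\delta)\cdot d\log\Delta\bigr)$, which is
\[
O\bigl(d\log\Delta\cdot(\log(M/\delta))^2\,\varepsilon^{-2}\,(\log\log(M/\delta)+\log\varepsilon^{-1})\bigr).
\]
This is dominated by the product form stated in the corollary once both factors are taken to be at least a constant, which is harmless since $\varepsilon<1$ and $\delta<1$.

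\textbf{Main obstacle.} The only delicate point is matching the cost model. Definition~\ref{def:delphic} demands $O(\log|\Omega_n|)$ time, so division, modulus and multiplication on $O(\log\Delta)$-bit integers must count as unit cost, or at least as $O(\log\Delta)$ per coordinate. I would state this assumption explicitly, and note that it is the same one implicitly used for Corollary~\ref{corr:klee}. Uniform sampling of $k_i$ when $n_i$ is not a power of two also needs care. I would handle it by rejection sampling, which gives expected constant overhead, or by assuming a uniform-integer oracle, as is standard. If the worst-case guarantee is required, truncate the rejection loop and absorb the failure probability into $\delta$.
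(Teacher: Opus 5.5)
Your proposal is correct and follows the paper's route. The paper likewise proves a lemma that $\range(\mathbf{r})$ is Delphic, using the size formula $\prod_{i}(\lfloor (b_i-a_i)/c_i\rfloor+1)$, independent per-coordinate sampling of $a_i+k c_i$, and a per-coordinate membership check, and then invokes Theorem~\ref{thm:unknownM} with $\log|\Omega| = d\log\Delta$. Your membership test ($a_i\le x_i\le b_i$ and $c_i \mid (x_i-a_i)$) and your treatment of empty progressions are, if anything, more careful than the paper's. The paper does not discuss the cost model or uniform integer sampling in its proof; it relies on its unit-cost model for word operations and takes uniform sampling of $k$ as given, so your caveats on those points go beyond it.
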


Observe that Klee's measure problem is a special case of multi-dimensional
arithmetic progressions wherein $c_i = 1$. It is worth remarking that, in
comparison to prior work that focused on the special case of arithmetic
progressions for $d=1$, the algorithm for multi-dimensional ranges (i.e.,
Klee's measure problem) can be directly lifted to multi-dimensional arithmetic
progressions, and the space and time complexity does not change.

\subsection*{Test Coverage Estimation}
Next, we observe that the test coverage estimation problem can also be formulated
as Delphic coverage by constructing $S_i = \Cov_t(\mathbf{a}_i)$ for each
$\mathbf{a}_i$, and again observing that each such $S_i$ belongs to the Delphic
family.

\begin{restatable}{corollary}{covercorr}
\label{corr:cov}
There is a streaming algorithm $\hybrid$ that, given any real numbers
$\varepsilon, \delta < 1$, and a stream
$\mathcal{A} = \langle \mathbf{a}_1, \cdots, \mathbf{a}_{\streamsize} \rangle$, where
$\mathbf{a}_i \in \{0,1\}^n$, computes an $(\varepsilon,\delta)$-approximation of
$|\Cov_t(\mathcal{A})|$. The algorithm has space complexity
$O\left(t(\log n) \cdot \log(\streamsize/\delta)\cdot \varepsilon^{-2}\right)$ and
worst-case update time complexity
$O\left(t(\log n) \cdot (\log(\streamsize/\delta))^2 \right.$ $\left.\log\log(\streamsize/\delta) \cdot
\varepsilon^{-2}\log \varepsilon^{-1}\right)$.
\end{restatable}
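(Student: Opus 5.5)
The plan is to reduce Corollary~\ref{corr:cov} to Theorem~\ref{thm:unknownM}. We take $\Omega_n = \{(T,\mathbf{y}) : T\subseteq[n], |T|=t, \mathbf{y}\in\{0,1\}^t\}$ and set $S_i = \Cov_t(\mathbf{a}_i)$, so that $|\Cov_t(\mathcal{A})| = |\bigcup_i S_i|$. Every element of $\Omega_n$ is encoded by listing the $t$ indices of $T$ (each in $\lceil\log n\rceil$ bits) together with the $t$ bits of $\mathbf{y}$. This gives an encoding of size $O(t\log n)$, and $\log|\Omega_n| = \log\bigl(\binom{n}{t}2^t\bigr) = \Theta(t\log(n/t)+t)$, which is $O(t\log n)$. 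With this encoding the space bound of Theorem~\ref{thm:unknownM} becomes $O(R\cdot t\log n)$ with $R=O(\log(M/\delta)\varepsilon^{-2})$. That is the claimed space bound.

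Next I check that the three Delphic queries for $S_i$ each run in $O(t\log n)$ time.

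\emph{Size.} $|S_i| = \binom{n}{t}$, because each $T$ determines $\mathbf{y}$ uniquely as the restriction of $\mathbf{a}_i$ to $T$. This is a fixed quantity, so we compute it once, store it in $O(\log\binom{n}{t}) = O(t\log n)$ bits, and return it on every query.

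\emph{Sampling.} Draw a uniformly random $t$-subset $T$ of $[n]$, for example by $t$ steps of Floyd's algorithm, or by rejection sampling of $t$ distinct indices with a small dictionary. Then output $(T, \mathbf{a}_i|_T)$. The cost is $O(t\log n)$ bit operations, or $O(t)$ word operations on $\log n$-bit words.

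\emph{Membership.} Given $(T,\mathbf{y})$, check that $a_{i,T_j} = y_j$ for each $j$, which costs $O(t)$ lookups.

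Here the algorithm keeps only the current $\mathbf{a}_i$ while processing it, so an $n$-bit buffer for the current item may be needed. I would treat this as input access rather than working space, following the model of Problem~\ref{prob:delphic}.

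Finally, I substitute into the update-time bound $O(R\log R\cdot\log(M/\delta)\cdot\log|\Omega|)$. We have $\log R = O(\log\log(M/\delta)+\log\varepsilon^{-1})$, which is absorbed into $\log\log(M/\delta)\cdot\log\varepsilon^{-1}$ up to constants, assuming both factors are at least a constant. This gives $O(t\log n\cdot(\log(M/\delta))^2\log\log(M/\delta)\cdot\varepsilon^{-2}\log\varepsilon^{-1})$.

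The main point needing care is that Definition~\ref{def:delphic} asks for $O(\log|\Omega_n|)$-time queries. I need $t\log n$ to be within a constant of $\log|\Omega_n|$, or at least that the space and time are stated in terms of the encoding length. Since the corollary is phrased directly with $t\log n$, I would argue that the analysis of $\hybrid$ actually charges $\log|\Omega|$ per stored sample and per query, and that replacing this by the encoding length $O(t\log n)$ keeps the proof valid. The uniform $t$-subset sampling within this budget is the other step to verify.
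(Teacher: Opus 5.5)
Your proposal is correct and follows the paper's route. The paper likewise proves a lemma that $\Cov_t(\mathbf{a}_i)$ is Delphic, with each query in $O(t\log n)$ time: the size is $\binom{n}{t}$, sampling draws a uniform $t$-subset $T$ and returns $(T,\mathbf{a}_i|_T)$, and membership compares the restriction of $\mathbf{a}_i$ to $T$ with $\mathbf{y}$. It then invokes Theorem~\ref{thm:unknownM} directly. Your explicit replacement of $\log|\Omega_n|$ by the $O(t\log n)$ encoding length, and your absorption of $\log R = O(\log\log(\streamsize/\delta)+\log\varepsilon^{-1})$ into the stated product, are points the paper leaves implicit, and you handle both correctly. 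One small nit: drawing $T$ deterministically and in sorted order, so that equal elements get equal encodings, costs $O(t\log t)\subseteq O(t\log n)$ word operations rather than $O(t)$, which is still within budget.
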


We also investigate whether the space complexity can be further improved in the
context of test coverage estimation. We present a hashing-based algorithm that
trades an improvement in space complexity for an increase in the update-time
overhead by relying on the use of NP oracles.

\begin{restatable}{theorem}{smallspacewithnp}
\label{thm:smallspace}
There is a streaming algorithm $\hashing$ with access to an {\rm NP} oracle that,
given a stream $\mathcal{A} = \langle \mathbf{a}_1, \cdots, \mathbf{a}_{\streamsize} \rangle$
and real numbers $0<\varepsilon, \delta<1$, where each
$\mathbf{a}_i \in \{0,1\}^n$, computes an $(\varepsilon,\delta)$-approximation of
$|\Cov_t(\mathcal{A})|$. The algorithm uses
$O\left(t \log n \cdot \varepsilon^{-2} \cdot \log\frac{1}{\delta}\right)$ space
and ${\rm poly}(n,t,1/\varepsilon)$ update time. 
\end{restatable}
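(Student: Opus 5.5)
We prove Theorem~\ref{thm:smallspace} with a hashing-based $F_0$-style estimator, as in the Gibbons--Tirthapura / Bar-Yossef et al.\ line of work. The NP oracle replaces explicit enumeration of each $\Cov_t(\mathbf{a}_i)$.

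\textbf{Encoding.} Encode each element $(T,\mathbf{y})$ of the universe as a string of $O(t\log n)$ bits: list the $t$ indices of $T$ in sorted order, each in $\lceil\log n\rceil$ bits, followed by $\mathbf{y}$. The universe then embeds in $\{0,1\}^m$ with $m=O(t\log n)$. Membership of $(T,\mathbf{y})$ in $\Cov_t(\mathbf{a})$ is checkable in time ${\rm poly}(n,t)$: decode $T$, check that it is well formed, and compare $\mathbf{a}|_T$ with $\mathbf{y}$.

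\textbf{Algorithm.} Take a pairwise independent (or $O(1)$-wise independent) hash family $h:\{0,1\}^m\to\{0,1\}^m$, for instance Toeplitz/affine maps over $\mathrm{GF}(2)$. Storing one $h$ costs $O(m)$ bits. For a level $\ell$, let $h_\ell$ be the first $\ell$ output bits of $h$. The sketch keeps the current level $\ell$ together with a bucket $B$: the set of distinct elements $x\in\Cov_t(\mathcal{A})$ with $h_\ell(x)=0^\ell$, capped at $\mathsf{thresh}=O(\varepsilon^{-2})$ elements. Storing $B$ costs $O(\varepsilon^{-2}\cdot t\log n)$ bits.

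\textbf{Update.} When $\mathbf{a}_i$ arrives, repeat the following until $\mathsf{thresh}$ elements are found or none remain:
\begin{enumerate}
\item Ask the NP oracle whether some $x\in\Cov_t(\mathbf{a}_i)\setminus B$ satisfies $h_\ell(x)=0^\ell$. This is an NP query, since the witness $x$ is polynomially checkable.
\item If so, use prefix search (self-reducibility) with $O(m)$ further oracle calls to extract such an $x$, and add it to $B$.
\end{enumerate}
If $B$ overflows, increment $\ell$, discard the elements of $B$ whose hash no longer has the required prefix, and re-query. Since $\ell\le m$, this takes at most ${\rm poly}(n,t,1/\varepsilon)$ oracle calls per update.

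At the end, output $|B|\cdot 2^{\ell}$.

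\textbf{Correctness of the invariant.} The sketch must satisfy $B=\{x\in\Cov_t(\mathcal{A}): h_\ell(x)=0^\ell\}$ at every moment. When $\ell$ increases, the elements seen earlier that hash to the finer cell already lay in the coarser cell. The coarser cell had at most $\mathsf{thresh}$ elements before overflowing, so they are all in $B$ and filtering $B$ suffices. Elements of the current set are found in full by the oracle loop. So the final output is exactly the standard ``level-$\ell$ sampling'' estimator applied to the set $\Cov_t(\mathcal{A})$.

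\textbf{Analysis.} Fix the level $\ell^*$ with $2^{\ell^*}$ about $\varepsilon^2|\Cov_t(\mathcal{A})|/c$. The cell size at $\ell^*$ has mean about $c\varepsilon^{-2}$. By pairwise independence its variance is at most its mean, so Chebyshev gives concentration within $(1\pm\varepsilon)$ with constant probability. At nearby levels the cell size is below the threshold as required. Also, the algorithm stops at level $\ell^*$ or $\ell^*+1$ with constant probability, via a Chebyshev bound on $h_{\ell^*-1}$ overflowing. This is the main technical step, but it is the standard argument.

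\textbf{Amplification.} Run $O(\log(1/\delta))$ independent copies and output the median. This gives the $(\varepsilon,\delta)$ guarantee with total space $O(t\log n\cdot\varepsilon^{-2}\log(1/\delta))$. The update time stays ${\rm poly}(n,t,1/\varepsilon)$ per copy, with at most $\log(1/\delta)\le{\rm poly}$ overhead.

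The expected obstacle is keeping the space bound independent of $\streamsize$. This works because the level only increases and the invariant is maintained without storing past $\mathbf{a}_j$. The remaining care is in the Chebyshev bookkeeping for the stopping level.
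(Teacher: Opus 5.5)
Your proposal is correct and follows the paper's construction. The paper's $\hashing$ is this same Gibbons--Tirthapura-style estimator: $\mathsf{R}=O(\log(1/\delta))$ independent Toeplitz hashes, a bucket kept below $\thresh=O(\varepsilon^{-2})$ under the same invariant $\mathcal{X}[i]=\Cov_t(\mathcal{A})\cap H[i]_{m[i]}^{-1}(0^{m[i]})$, and NP-oracle enumeration adapted from Bellare--Goldreich--Petrank, which plays the role of your prefix-search self-reduction. The only difference is in the analysis: your window argument (stopping level in $\{\ell^*,\ell^*+1\}$ plus Chebyshev at those levels) needs a larger constant in $\thresh$, which the paper says would suffice; its appendix instead uses the nesting of the cells to note that at levels $m^*-2,m^*-1$ only the lower tail can cause failure, and it bounds four events via Chebyshev and Paley--Zygmund to obtain $\thresh=1+9.84(1+\varepsilon^{-2})$.
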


For an oracle algorithm, we only count the space used by the base algorithm. 
The space complexity of the above algorithm, in general, is tight up to an
$O(\log n)$ factor, as for $t = n$ the problem reduces to $\mathbb{F}_0$
computation in the traditional insertion-only data stream model, for which a
lower bound of $\Omega(n)$ holds (for constant $\varepsilon,\delta$) - note that in
our case items are from $\{0,1\}^n$. 
The problem of designing an algorithm for the 
test coverage estimation problem that achieves tight bounds from both space and time complexity
perspectives remains open.

\subsection*{DNF Counting}
We again observe that DNF counting can be formulated as Delphic coverage by
constructing the set $S_i = \satisfying{D_i}$ corresponding to each term $D_i$,
and observing that each such $S_i$ belongs to the Delphic family. Therefore, the
following corollary follows from Theorem~\ref{thm:unknownM}.

\begin{restatable}{corollary}{dnfcorr}\label{corr:dnf}
There is a streaming algorithm that, given any real numbers
$\varepsilon, \delta < 1$, and a stream
$\langle D_1, D_2, \ldots, D_{\streamsize} \rangle$ of terms over $n$ variables, computes an
$(\varepsilon,\delta)$-approximation of $|\satisfying{\varphi}|$, where
$\varphi = D_1 \vee D_2 \vee \ldots \vee D_{\streamsize}$. The algorithm takes
$O\left(n \cdot \log(\streamsize/\delta) \cdot \varepsilon^{-2}\right)$ space and
$O\left(n \cdot (\log(\streamsize/\delta))^2 \right.$ $\left. \log\log(\streamsize/\delta) \cdot
\varepsilon^{-2}\log \varepsilon^{-1}\right)$ update time.
\end{restatable}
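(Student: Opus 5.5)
The plan is to derive Corollary~\ref{corr:dnf} directly from Theorem~\ref{thm:unknownM}. I take $\Omega_n=\{0,1\}^n$ and let $\mathcal{F}_n$ be the family of solution sets $\satisfying{D}$ of terms $D$ over $n$ variables. With $S_i=\satisfying{D_i}$ we have $\bigcup_i S_i=\satisfying{\varphi}$, so an $(\varepsilon,\delta)$-approximation of the union is exactly what the corollary asks for. Since $\log|\Omega_n|=n$, substituting into the bounds of Theorem~\ref{thm:unknownM} gives space $O(R\cdot n)=O(n\log(\streamsize/\delta)\varepsilon^{-2})$. The update time is $O(R\log R\cdot\log(\streamsize/\delta)\cdot n)$. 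Here $\log R=O(\log\log(\streamsize/\delta)+\log\varepsilon^{-1})$, so $R\log R$ is at most a constant times $\log(\streamsize/\delta)\,\varepsilon^{-2}\log\log(\streamsize/\delta)\log\varepsilon^{-1}$, up to the usual convention of treating $\log\varepsilon^{-1}$ and $\log\log$ as at least $1$. This yields the stated update time.

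The main step is to check that this family is Delphic, i.e.\ that the three queries of Definition~\ref{def:delphic} run in $O(n)$ time. A term $D$ is stored as a partial assignment: for each variable it records positive, negative, or absent. This takes $O(n)$ bits, equivalently $O(1)$ words of $n$ bits. The queries are then handled as follows.
\begin{enumerate}
\item \emph{Size.} If $D$ contains complementary literals, then $|\satisfying{D}|=0$. Otherwise $|\satisfying{D}|=2^{n-k}$, where $k$ is the number of distinct variables in $D$. Both $k$ and the consistency check take $O(n)$ time, and the answer is written as an $n$-bit number.
\item \emph{Sampling.} Fix the variables appearing in $D$ to their forced values and set each free variable to an independent uniform bit. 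This takes $O(n)$ time and $n$ random bits.
\item \emph{Membership.} Given $x\in\{0,1\}^n$, compare it with the literals of $D$ in $O(n)$ time.
\end{enumerate}
I will treat terms as given in this normalized form, or note that normalizing a raw list of literals costs $O(n)$ time per stream item using a length-$n$ array. This overhead is absorbed into the update time.

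The only mild obstacle is the unsatisfiable (empty) term, which Problem~\ref{prob:delphic} still allows as an input set. I will make sure the size query returns $0$ and that the algorithm never samples from an empty set. I expect $\hybrid$ to sample from $S_i$ with probability proportional to a count that is then zero, so this case is harmless. If needed, empty terms can simply be skipped on arrival, since they contribute nothing to the union.

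The remaining work is accounting: the space is dominated by storing $R$ sampled assignments of $n$ bits each, and the update time follows from the simplification of $\log R$ above.
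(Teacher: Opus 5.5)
Your proposal is correct and follows the same route as the paper. The paper also proves a lemma that $\satisfying{D_i}$ is Delphic via $O(n)$-time size ($2^{n-|D_i|}$), sampling (uniform bits for the free variables), and membership queries, and then invokes Theorem~\ref{thm:unknownM} with $\log|\Omega|=n$; your treatment of unsatisfiable terms and your expansion of $\log R$ into the $\log\log(\streamsize/\delta)\cdot\log\varepsilon^{-1}$ factor supply details the paper leaves implicit.
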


\subsection{Techniques}\label{sec:technique}

Our main algorithm, $\hybrid$, is a simple sampling-based algorithm for estimating
$\bigl|\bigcup_i S_i\bigr|$ in a streaming setting.
At a high level, if it were possible to sample every element of
$\bigcup_i S_i$ independently and identically with some fixed probability $p$,
then the estimator $|\mathcal{X}|/p$ would be an unbiased estimate of the union
size, and standard concentration bounds would yield $(\varepsilon,\delta)$-guarantees
for an appropriate choice of $p$ (on the order of
$1/(\varepsilon^2 \cdot |\bigcup_i S_i|)$).

The difficulty is that neither the union size nor a suitable sampling probability
$p$ is known in advance, and the sets $S_i$ arrive one by one as a stream.
We therefore begin by explaining the intuition assuming a fixed sampling
probability, and then explain how the algorithm adapts when $p$ itself evolves
during execution.

Suppose, for the moment, that a fixed sampling probability $p$ were known.
The algorithm maintains a bucket $\mathcal{X}$ with the informal invariant that,
after processing $S_j$, every element of $\bigcup_{i=1}^j S_i$ is included in
$\mathcal{X}$ independently with probability $p$.
When $S_1$ arrives, this can be achieved by sampling
$N_1 \sim \mathsf{Bin}(|S_1|,p)$ and storing $N_1$ uniformly chosen distinct
elements of $S_1$.

Now assume this intuition holds after processing $S_j$, and consider the arrival
of $S_{j+1}$. Simply sampling elements of $S_{j+1}$ with probability $p$ would bias
the sketch toward elements appearing in many sets.
To avoid this, the algorithm first removes from $\mathcal{X}$ all elements that
belong to $S_{j+1}$, and then samples elements of $S_{j+1}$ with probability $p$
and inserts them into $\mathcal{X}$. This remove--then--resample step ensures that,
for any element $x \in S_{j+1}$, the probability that $x$ belongs to $\mathcal{X}$
after processing $S_{j+1}$ is exactly $p$, regardless of whether $x$ appeared in
earlier sets.

The remaining issue is how to choose $p$ without knowing the union size in advance.
The algorithm addresses this by starting with $p=1$ and dynamically decreasing it
as needed. Whenever the sketch $\mathcal{X}$ grows beyond a prescribed threshold
$\Threshold$, the algorithm removes each element of $\mathcal{X}$ independently
with probability $1/2$ and updates $p \gets p/2$. Intuitively, this thinning step
keeps the sketch size under control while maintaining the correct sampling scale.

It is important to stress that this discussion is intentionally informal.
Once thinning steps are introduced, $p$ becomes a random variable whose value
depends on the execution of the algorithm, and it is no longer meaningful to
literally claim that elements are sampled independently with probability $p$
throughout the run.  The formal analysis, presented in the proof of correctness,
carefully accounts for the randomness and adaptivity of $p$ and shows that the
estimator $|\mathcal{X}|/p$ nevertheless provides the desired
$(\varepsilon,\delta)$-approximation.

For generality, our time complexity analysis relies only on the three black-box
operations provided by Delphic sets: membership testing, size queries, and access
to random samples. In particular, generating distinct samples is handled using
only this abstract interface. In several concrete settings—such as
$\Cov(\mathbf{a}_i)$, $\range(r_i)$, or $\satisfying{D_i}$—more efficient sampling
procedures are possible, leading to improved update times. We leave a refined,
problem-specific analysis to future work.

A key strength of our approach is the simplicity of both the algorithm $\hybrid$
and its analysis. The algorithm consists of only a small number of intuitive
operations, and the proof relies on elementary probabilistic arguments. This
simplicity makes the method easy to implement and facilitates independent
verification and reuse in other streaming and sampling contexts.

\paragraph{Organization}
The rest of the paper is organized as follows: We discuss related work in Section~\ref{sec:related}. We then discuss notations and preliminaries in Section~\ref{sec:notations}. The paper's primary technical contribution is presented in Section~\ref{sec:algorithms}, which consists of four parts. Section~\ref{sec:delphic} provides the proof of the main technical theorem~\ref{thm:unknownM}. We then demonstrate Theorem~\ref{thm:unknownM} can be applied in different contexts, Klee's Measure Problem (Section~\ref{sec:klee}), coverage estimation (Section~\ref{sec:coverage}), and DNF counting (Section~\ref{sec:dnf}) to obtain efficient streaming algorithms. We finally conclude in Section~\ref{sec:conclusion}.

\section{Related Work}\label{sec:related}

Starting with the seminar work of Alon, Matias, and Szegedy~\cite{AMS99}, the streaming model of computation has emerged as an important area of research in theoretical computer science. This model is well suited to investigate algorithmic problems that arise from real life situations dealing with large data. A central focus of investigation has been on estimating the frequency moments $\mathbb{F}_k$ of a stream of data items. 
In particular, considerable work has been done in designing algorithms for estimating the the $0^{th}$ frequency moment ($\mathbb{F}_0$), the number of distinct elements in the
stream, culminating in the development of an algorithm with 
optimal space complexity $O(\log |\Omega| + \frac{1}{\varepsilon^2} )$ and $O(1)$ update time~\cite{KNW10}, where $\Omega$ is the universe. 

In the case when items in the data stream succinctly represent a set of elements of the universe, $\mathbb{F}_0$ estimation becomes estimation of size of the {\em union of sets}. The union-of-sets problem has been studied in  approximate counting literature. The line of work that is closest to ours is the work on DNF counting problem initiated by Karp and Luby~\cite{KL83}. Since the work of Karp and Luby, substantial research has gone into understanding various aspects of DNF counting problem including designing hashing-based algorithms~\cite{KLM89,DKLR00,MSV17,MSV18,MSV19,CMV16}. We note that Karp-Luby algorithm can be adapted to counting union of sets in the streaming setting to get an algorithm with space and time complexity $O(\frac{M \log|\Omega|}{\epsilon^2}\log M\log n)$. In comparison, we achieve only a logarithmic dependence on $M$. 

As mentioned in the introduction, Klee's Measure Problem (KMP) is a fundamental problem that is well investigated in computational geometry with the  focus of designing efficient algorithms in the traditional RAM model. Klee introduced the one-dimensional version of the problem over reals and presented $O(n\log n)$ time algorithm where $n$ is the number of line segments~\cite{klee1977can}. Fredman and Weide showed that this is optimal in time (under certain model)~\cite{fredman1978complexity}. Since then substantial work has gone into extending the algorithms to multidimensional case~\cite{bentley1977algorithms,overmars1991new,chan2010slightly,chen2005space, BringmannF10, GudmundssonP17, chlebus1998klee} and also with space complexity considerations~\cite{chen2005space,vahrenhold2007place}. 

Discrete version of KMP over streaming model has been considered before. However the success has been limited~\cite{BKS02, PT07, TW12,SP09}. Pavan and Tirthapura
considered the problem for one dimensional ranges over the discrete domain $\{1,\ldots,n\}$ and gave an algorithm with $O(\varepsilon^{-2}\log n\log 1/\delta )$ 
space and $O(\log {n/\varepsilon} \log 1/\delta)$ update time~\cite{PT07}. Sharma, Busch, Vaidyanathan, Rai, and Trahan considered the
two-dimensional version but only gave a $O(\sqrt{\log U})$-approximation for the general case where $U$ is the total number of
discrete points in the space~\cite{SHARMA2015688}. Tirthapura and Woodruff~\cite{TW12} considered the general $d$ dimensional problem. They
presented an algorithm, based of range efficient implementations of {\em count sketch} algorithm~\cite{CCF04}  and recursive sketches~\cite{IW05,BO10}, which is efficient in space complexity. However the update time of the algorithm has exponential dependency
on the dimension $d$ (see Remark~\ref{remark}). In a concurrent work, 
Pavan \textcircled{r} Vinodchandran \textcircled{r} Bhattacharyya \textcircled{r} Meel\footnote{ \textcircled{r} refers to the randomized author ordering.} also proposed another hashing-based technique with exponential  dependence on the dimension $d$.  

\ignore{

Estimating the size of the union of sets $S_1, \dots, S_M$  is an well studied problem in algorithms. Usually the complexity measure is the number of elements whose presence or absence in the sets has to be noted during the run of the algorithms. Karp-Luby \cite{KL83} and later Karp-Luby-Madras \cite{KLM89} showed that the size of the union of the sets can be $(\epsilon, \delta)$-estimated by keeping track of $O(\frac{M \log |\Omega|}{\epsilon^2}\log(1/\delta))$ number of elements.

\subsubsection{Time vs Space trade-off in streaming complexity}

The streaming model of computing a function of the input is a very well studied area in theoretical computer science has also has applications in many real life settings. The paper of Alon-Matias-Szegedy \cite{AMS99}
was one of the seminal papers in this area. They gave non-trivial bound for estimating the number of distinct elements in the stream. Recently Kane, Nelson, and Woodruff~\cite{KNW10} present tight upper bounds for this problem. The union of sets can be estimated by using their algorithms. The space complexity is optimal, but the update time for each set $S_i$ is $O(|S_i|)$, which in the Coverage-Estimation problem in $\binom{n}{t}$. Our algorithm in Theorem~\ref{thm:smallspace} uses almost optimal space complexity but the time complexity is polynomial assuming we have access to an NP oracle. 
}

\section{Notations and Preliminaries}\label{sec:notations}
We will denote by $[n]$ the set $\{1, 2, \dots, n\}$ and by $\binom{[n]}{t}$ the set of all subsets of 
$[n]$ of size $t$. For any $n\in \mathbb{N}$ and any $p \in [0,1]$ we will also use $\mathsf{Bin}(n, p)$ to denote the Binomial distribution over the set of natural numbers $\{0, \dots, n\}$ where probability of a number $0\leq m\leq n$ is $\binom{n}{m}p^m(1-p)^{n-m}$.

At any point the input item is a length $n$ string. However, as done in the case of traditional space bounded computations, for counting space, we will not include the space required to represent the input item. We will consider that input is available on a read-only input tape that allows random access and does not contribute to the space used by the algorithm. In this paper, we consider unit-cost model and assume all basic operations including arithmetic operations on words can be performed in unit time. 

Our technical analysis employs the standard concentration inequalities: Chernoff bound, Chebyshev's bound, and Paley-Zygmund Inequality.   For the proof of Theorem~\ref{thm:unknownM} we will need the following theorem, popularly known as the Coupon Collector Problem.

\begin{theorem}\label{thm:cc}
Given access to uniform random samples from a set $T$ and a number $r\leq |T|$, let $Z_r$ be a random variable that stand for the number of independent uniform random samples from $T$ needed before we get $r$ distinct samples from $T$. Then 
$$\Pr\left[Z_r > \beta r\log r\right] \leq r^{-\beta +1}.$$
\end{theorem}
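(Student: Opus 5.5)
The plan is to reduce to the classical coupon collector problem on a set of exactly $r$ coupons, and then apply the standard union-bound argument. Throughout, $\log$ is the natural logarithm; for any other base $b\geq 2$ we have $\log_b r\leq \ln r$ when $r\geq 1$, so the threshold $\beta r\log r$ only becomes smaller and the bound must be rechecked.

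\emph{Step 1: $Z_r$ as a sum of geometric waiting times.} Let $N=|T|\geq r$. Write $Z_r=\sum_{i=0}^{r-1} G_i$, where $G_i$ is the number of samples drawn while exactly $i$ distinct elements have been seen. By independence of the samples, the $G_i$ are independent. Each $G_i$ is geometric with success probability $p_i=(N-i)/N$.

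\emph{Step 2: stochastic domination.} Consider the classical collector $W_r=\sum_{i=0}^{r-1} H_i$ for a set of exactly $r$ coupons. Here the $H_i$ are independent geometrics with success probability $q_i=(r-i)/r$. Since $N\geq r$, we have
\[
p_i=1-\frac{i}{N}\;\geq\; 1-\frac{i}{r}=q_i .
\]
A geometric variable with larger success probability is stochastically smaller, so $G_i\preceq H_i$ for each $i$. Stochastic domination is preserved under sums of independent variables; one can see this by coupling each pair $(G_i,H_i)$ monotonically via a common uniform variable. Hence $\Pr[Z_r>m]\leq \Pr[W_r>m]$ for every $m$.

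\emph{Step 3: bound the tail of $W_r$ by a union bound.} Set $m=\beta r\ln r$. The event $W_r>m$ means some fixed coupon $c$ among the $r$ coupons is missed in the first $\lfloor m\rfloor$ draws. For a fixed coupon this has probability
\[
(1-1/r)^{\lfloor m\rfloor}\leq e^{-\lfloor m\rfloor/r}.
\]
Ignoring the floor, this is $e^{-\beta\ln r}=r^{-\beta}$. A union bound over the $r$ coupons gives $\Pr[W_r>m]\leq r\cdot r^{-\beta}=r^{-\beta+1}$. Combining with Step 2 yields the claim.

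The main obstacle is the integrality and edge cases. The floor in $\lfloor m\rfloor$ must be handled: the event $W_r>m$ is the same as $W_r>\lfloor m\rfloor$, so the bound $e^{-\lfloor m\rfloor/r}$ must be compared with $r^{-\beta}$. If necessary, I would state the bound for $m$ with $\beta r\ln r$ an integer, or absorb the floor by noting $(1-1/r)^{\lfloor m\rfloor}$ versus $e^{-m/r}$ is fixed up by the slack in $1-x\leq e^{-x}$. The case $r=1$ is trivial, since $Z_1=1$ always. The only genuinely conceptual step is the domination argument in Step 2, which is what allows the $\log r$ rather than $\log |T|$ dependence.
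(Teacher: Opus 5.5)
The domination half of your argument is right, but the floor step fails: under your convention $\log=\ln$ the statement is false, so it cannot be completed as written. The paper gives no proof of its own to compare against; it simply quotes this as the standard coupon-collector bound.

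\textbf{What works.} Your phase decomposition and the monotone coupling of geometric waiting times correctly reduce a set of size $|T|\ge r$ to the classical collector on exactly $r$ coupons. This is the standard argument behind the bound.

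\textbf{Where it fails.} Take $r=2$, any $T$ with $|T|\ge 2$, and any $\beta\in(1,1/\ln 2)$. Then $\beta r\ln r=2\beta\ln 2<2\le Z_2$ surely, so $\Pr[Z_2>\beta r\ln r]=1$, whereas $r^{1-\beta}=2^{1-\beta}<1$. So the slack in $1-x\le e^{-x}$ does not absorb the floor in general. That slack gives $(1-1/r)^{\lfloor m\rfloor}\le e^{-m/r}$ for $m\ge 2r+1$, say, but not for all $m$. Restricting to integral $\beta r\ln r$ would prove a different statement. Your remark on bases is also reversed: $\log_2 r=\ln r/\ln 2>\ln r$, so base $2$ enlarges the threshold rather than shrinking it.

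\textbf{The repair.} Read $\log$ as $\log_2$. This is also the reading the paper needs when it applies the theorem with $\beta=\log(6M/\delta)$ and $r\ge 2$, bounding the failure probability of \DistinctSample{} by $r^{1-\beta}\le 2\cdot 2^{-\log_2(6M/\delta)}$. With $m=\beta r\log_2 r$ you have $r^{-\beta}=e^{-m\ln 2/r}$, and
\[
(1-1/r)^{\lfloor m\rfloor}\le e^{-\lfloor m\rfloor/r}\le e^{-m\ln 2/r}
\]
whenever $\lfloor m\rfloor\ge m\ln 2$. Since $\lfloor m\rfloor>m-1$, this holds for $m\ge 1/(1-\ln 2)\approx 3.26$. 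The remaining cases are immediate:
\begin{itemize}
\item for $r=1$ or $\beta\le 1$ the bound is trivial;
\item for $\beta>1$ and $r\ge 3$ you have $m>3\log_2 3>4.7$;
\item for $r=2$, $2\cdot 2^{-\lfloor 2\beta\rfloor}\le 2^{1-\beta}$ because $\lfloor 2\beta\rfloor>2\beta-1\ge\beta$.
\end{itemize}
With this, your union bound over the $r$ coupons, combined with the domination from Step 2, gives $\Pr[Z_r>\beta r\log_2 r]\le r^{1-\beta}$.
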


\subsection*{Pairwise Independent Hash functions}
Let $n,m\in \mathbb{N}$ and $\mathcal{H}(n,m) \triangleq \{ h:\{0,1\}^{n} \rightarrow \{0,1\}^m \}$ be a family of hash functions mapping $\{0,1\}^n$ to $\{0,1\}^m$. We use $h \xleftarrow{R} \mathcal{H}(n,m)$ to denote the probability space obtained by choosing a function $h$ uniformly at random from $\mathcal{H}(n,m)$. 
\begin{definition}
	A family of hash functions $\mathcal{H}(n,m)$ is $2-$wise independent if $\forall \alpha_1, \alpha_2 \in \{0,1\}^m$,  $\text{ distinct } x_1, x_2,  \in \{0,1\}^n, h \xleftarrow{R} \mathcal{H}(n,m)$, 
	\begin{align}
	\Pr[(h(x_1) = \alpha_1) \wedge (h(x_2) = \alpha_2) ] = \frac{1}{2^{2m}} %
	\end{align}
\end{definition}

\paragraph{Explicit families}	
In this work, one hash family of particular interest is $\Hteop(n,m)$, which is known to be  2-wise independent~\cite{carter1977universal}. The family is defined as follows:
$\Hteop(n,m) \triangleq \{ h: \{0,1\}^n \rightarrow \{0,1\}^m  \}$ is the family of functions of the form $h(x) = Ax+b$ with $A \in \mathbb{F}_{2}^{m \times n}$ and $b \in \mathbb{F}_{2}^{m \times 1}$ where $A$ is a uniformly randomly chosen Toeplitz matrix of size $m\times n$ while $b$ is uniformly randomly  matrix of size $m \times 1$. it is worth noticing that $\Hteop$ can be represented with $\Theta(n+m)$-bits. 
	For every $m \in \{1, \ldots
	n\}$, the $m^{th}$ prefix-slice of $h$, denoted $h_{m}$, is a
	map from $\{0,1\}^{n}$ to $\{0,1\}^m$, where $h_{m}(y)$ is the first $m$ bits of $h(y)$. Observe that when $h(x) = Ax+b$,  $h_{m}(x) = A_{m}x+b_{m}$, where $A_{m}$ denotes the submatrix formed by the first $m$ rows of $A$ and $b_{m}$ is the first $m$ entries of the vector $b$.

\section{The Algorithms}\label{sec:algorithms}

\subsection{Coverage of Delphic Sets}\label{sec:delphic}

We restate the theorem that we prove in this section.

\smalltime*

\begin{algorithm}
	\caption{$\hybrid$}\label{algo:final}
	\begin{algorithmic}[1]
		\State Initialize $p \gets 1$; 
         $\Threshold \gets \max\left(  12  \cdot \frac{\ln (48/\delta)}{\varepsilon^2}, 6 (\ln \frac{6}{\delta} + \ln \streamsize) \right)$
		\State Initialize $\mathcal{X} \gets \emptyset$
		
		\For{$i = 1$ to $\streamsize$}
		
		\For{all $s\in \mathcal{X}$}\label{line:final-for-sketch-begin}
		\If{$s\in S_i$}
		\State remove $s$ from $\mathcal{X}$\label{line:final-remove}
		\EndIf
		\EndFor\label{line:final-for-sketch-end}
		
		\State $N_i \gets \mathsf{Bin}(|S_i|,p)$\label{line:final-pick}
		
		\While{$N_i + |\mathcal{X}| \geq \Threshold$}\label{line:final-check-threshold-begin}
		\State $N_i \gets \mathsf{Bin}(N_i,1/2)$
		\State Throw away each element of $\mathcal{X}$ with probability $\frac{1}{2}$ \label{line:final-throw}
		\State $p = p/2$\label{line:final-check-threshold-end}
		\EndWhile
        \State $\mathcal{Z} \gets $\DistinctSample$(S_i,N_i)$ \label{line:sample-distinct}
        \State $\mathcal{X} \gets \mathcal{X} \cup \mathcal{Z}$ \label{line:adding-samples}
	\EndFor
		\State Output $\frac{|\mathcal{X}|}{p}$
	\end{algorithmic}
\end{algorithm}

\begin{algorithm}
    \caption{{\DistinctSample}($S_i$,$N_i$)}\label{algo:distinctsample}
    \begin{algorithmic}[1]
	\State Initialize $\ell \gets 0; k \gets 0$; $\mathcal{Z} \gets \emptyset$ \label{line:CCbegin}
		\While{$\ell < 1+ N_i\log N_i\log (\frac{6\streamsize}{\delta})$}\label{line:final-sample-loop-begin}
   \State $\ell \gets \ell+1$    
		\State Draw a random sample $y$ from $S_i$\label{line:final-sample} 
		 \If{$y\not\in \mathcal{Z}$} 
		 \State Add $y$ to $\mathcal{Z}$; $k \gets k+1$  
		\EndIf       
         \If{$k = N_i$} {\Return $\mathcal{Z}$}  \EndIf 
  \EndWhile  \label{line:final-sample-loop-end}
  \If  {$\ell =1+  N_i\log N_i\log \left(\frac{6\streamsize}{\delta}\right)$} \Return $\emptyset$ \label{line:ccfail}
  \EndIf 

    \end{algorithmic}
\end{algorithm}

\begin{algorithm}	\caption{$\naive$}\label{algo:s2}
	\begin{algorithmic}[1]
		\State Initialize $p \gets 1$ 
		\State  $\Threshold \gets \max\left(  12  \cdot \frac{\ln (48/\delta)}{\varepsilon^2}, 6 (\ln \frac{6}{\delta} + \ln \streamsize) \right)$
		\State Initialize $\mathcal{X} \gets \emptyset$
		
		\For{$i = 1$ to $\streamsize$}
		
		\For{all $s\in \mathcal{X}$}\label{line:s2final-for-sketch-begin}
		\If{$s\in S_i$}
		\State remove $s$ from $\mathcal{X}$\label{line:s2final-remove}
		\EndIf
		\EndFor\label{line:s2final-for-sketch-end}
		\State $N_i \gets \mathsf{Bin}(|S_i|,p)$\label{line:s2final-pick}
         {\color{black}  \State  Draw $N_i$ distinct samples from $S_i$ and add them to $\mathcal{X}$} \label{line:s2Nipick} 

{\color{black}		\While{ $|\mathcal{X}| \geq \Threshold$}\label{line:s2final-check-threshold-begin}
		\State Throw away each element of $\mathcal{X}$ with probability $\frac{1}{2}$ \label{line:s2final-throw} 
\State $p = p/2$\label{line:s2final-check-threshold-end}
  \EndWhile}   
\EndFor
\State Output $\frac{|\mathcal{X}|}{p}$
\end{algorithmic}
\end{algorithm}

\begin{proof}
We present the algorithm {$\hybrid$} in Algorithm~\ref{algo:final}. To analyze its correctness, we compare it to an
idealized variant {\naive}, which makes the following modifications:
\begin{itemize}
    \item Instead of invoking $\DistinctSample$, {\naive} assumes access to an oracle that always returns
    $N_i$ distinct samples from $S_i$.
    \item Instead of postponing sampling until $N_i+|\mathcal{X}|<\Threshold$, {\naive} samples $N_i$ elements
    immediately and, whenever $p$ is halved, removes each element of $\mathcal{X}$ independently with probability
    $1/2$.
\end{itemize}
We give a pseudo-code for {\naive} in Algorithm~\ref{algo:s2}.

Let $\mathsf{Bad}_{\naive}$ and $\mathsf{Bad}_{\hybrid}$ denote the events that the final output
$\frac{|\mathcal{X}|}{p}$ lies outside
$[(1-\eps)|S^{(\streamsize)}|,\,(1+\eps)|S^{(\streamsize)}|]$ for {\naive} and {\hybrid}, respectively.
Let $\Fail$ be the event that some call to $\DistinctSample$ in {\hybrid} returns $\emptyset$
(only counted when $N_i\ge 2$).

\begin{observation}\label{obs:hybrid-vs-naive}
\begin{align*}
\Pr[\mathsf{Bad}_{\hybrid}] \le \Pr[\mathsf{Bad}_{\naive}] + \Pr[\Fail].
\end{align*}
\end{observation}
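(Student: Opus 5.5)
We will prove Observation~\ref{obs:hybrid-vs-naive} by a coupling argument. The plan is to define an idealized process $\hybrid'$ which is exactly {\hybrid} except that line~\ref{line:sample-distinct} calls an oracle returning $N_i$ uniformly random distinct elements of $S_i$, rather than calling $\DistinctSample$. We couple {\hybrid} and $\hybrid'$ on a common probability space, using the same binomial draws and the same thinning coins. On the complement of $\Fail$, every call to $\DistinctSample$ returns a set of exactly $N_i$ distinct elements. Conditioned on not failing, these elements are uniformly distributed over $N_i$-subsets of $S_i$, since the samples are i.i.d.\ uniform and the stopping rule depends only on the count of distinct elements. Calls with $N_i\le 1$ never fail, because one draw suffices.

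Hence, on $\overline{\Fail}$ the two executions can be made identical. This gives
\[
\Pr[\mathsf{Bad}_{\hybrid}]\le \Pr[\mathsf{Bad}_{\hybrid'}]+\Pr[\Fail].
\]
To make this rigorous, we run the same sample stream in both processes, and $\hybrid'$ uses the $\DistinctSample$ output whenever it succeeds, resampling fresh only on failure.

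The main step is to show that $\hybrid'$ and {\naive} produce the same joint distribution of the final pair $(\mathcal{X},p)$. We proceed by induction on $i$, proving that the pair $(\mathcal{X},p)$ after processing $S_i$ has the same distribution in both algorithms.

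Fix the state before round $i$. Both algorithms first remove $\mathcal{X}\cap S_i$ and then draw $N_i\sim\mathsf{Bin}(|S_i|,p)$. In {\naive}, the algorithm adds $N_i$ uniform distinct elements of $S_i$, which are disjoint from the current $\mathcal{X}$, and then repeatedly thins the whole set while $|\mathcal{X}|\ge\Threshold$. In $\hybrid'$, the algorithm thins $N_i$ binomially and thins $\mathcal{X}$ while $N_i+|\mathcal{X}|\ge\Threshold$, and only afterwards adds $N_i$ uniform distinct elements.

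The key fact is the following. Take a uniform random $N$-subset of $S_i$ and thin it by independent fair coins. The result is a uniform random subset of $S_i$ whose size is $\mathsf{Bin}(N,1/2)$. This holds because thinning by exchangeable coins preserves uniformity conditional on the size.

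Applied repeatedly, this shows the two procedures agree in distribution. Each halving step thins the old part $\mathcal{X}$ and the new sample independently. The loop condition depends only on the total size $N_i+|\mathcal{X}|$, and that total evolves identically in law in both algorithms. The number of halvings is therefore equal in law, and conditional on it the final sets are distributed identically.

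It follows that $\Pr[\mathsf{Bad}_{\hybrid'}]=\Pr[\mathsf{Bad}_{\naive}]$, and combining with the coupling bound yields the Observation.

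The main obstacle is the delicate bookkeeping in this induction. The stopping time of the while loop is random, so one must argue jointly about the trajectory of sizes and the identities of the elements. I would first condition on the entire sequence of thinning coins for the old elements and on the size trajectory of the new part. I would then verify that the distribution of the surviving new elements is uniform given its final size in both algorithms. This step uses the fact that a uniform subset thinned by exchangeable coins remains uniform at every step.
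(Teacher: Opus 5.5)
Your argument is correct, but it reaches the Observation by a different decomposition than the paper.

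The paper couples {\naive} and {\hybrid} directly, in a single step. It shares the randomness for the $N_i$, the halving decisions and the thinning coins. It then asserts that on $\overline{\Fail}$ the postponed sampling in {\hybrid} yields exactly the elements that survive thinning in {\naive}, so $\mathsf{Bad}_{\hybrid}\subseteq\Fail\cup\mathsf{Bad}_{\naive}$.

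You instead factor through the idealized process $\hybrid'$, in two steps:
\begin{enumerate}
\item a pathwise coupling that handles only the finite sampling budget of {\DistinctSample};
\item a separate exact equality in law between $\hybrid'$ and {\naive}, which handles the reordering of sampling and thinning.
\end{enumerate}

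The second step supplies what the paper's one-line coupling leaves implicit. Sharing random bits does not by itself make the thinned uniform $N_i$-subset in {\naive} coincide with the output of {\DistinctSample}. What is needed is exactly your lemma: a uniform subset thinned by fair coins is uniform of its final size, conditionally on the whole size trajectory. That conditioning covers the stopping time of the while loop, since the loop depends only on $|\mathcal{X}|+N_i$. The lemma follows from permutation invariance of the law over $S_i$, as you indicate.

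The paper's route is shorter. Yours separates the only genuine discrepancy between the algorithms (the budget in {\DistinctSample}) from a pure reordering identity, and makes each step checkable.

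Two small bookkeeping points:
\begin{enumerate}
\item After the first failure the two states diverge, and later binomial parameters may differ. Read ``same binomial draws'' as shared underlying uniforms, or simply let the processes continue independently. This is harmless, because you only need correct marginals and exact agreement on $\overline{\Fail}$.
\item The cleanest form of your first coupling: let $\hybrid'$ run the same coupon-collector loop on the same sample stream, but with no budget. It outputs an exactly uniform $N_i$-subset and agrees with {\DistinctSample} whenever the latter succeeds, so no separate ``resample on failure'' mixture argument is needed.
\end{enumerate}
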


\noindent\emph{Justification.}
We couple the randomness of {\hybrid} and {\naive} so that they use the same random bits for generating the values
$N_i$, for deciding when $p$ is decreased, and for thinning the sketch.
Conditioned on $\overline{\Fail}$, each call to $\DistinctSample$ returns exactly $N_i$ distinct elements, and the
postponed sampling in {\hybrid} produces exactly the same surviving elements that remain after the loop~\ref{line:s2final-for-sketch-begin}--~\ref{line:s2final-for-sketch-end}).  
{\naive}
Hence, under this coupling, the two algorithms produce identical outputs whenever $\overline{\Fail}$
occurs. Therefore,
\begin{align*}
\mathsf{Bad}_{\hybrid} \subseteq \Fail \cup \mathsf{Bad}_{\naive},
\end{align*}
which implies Observation~\ref{obs:hybrid-vs-naive}.

% \rev{The pathwise inclusion above is a convenient informal reading; the mechanism that
% holds rigorously is stochastic domination rather than a coupling under which the two runs
% literally coincide. In the machine-checked development both {\hybrid} and {\naive} are
% Markov chains, and the transition law of {\hybrid} is dominated, step by step, by the single
% kernel that {\naive} follows exactly; this domination propagates to the printed value
% $|\mathcal{X}|/p$, and the mass it discards at each step is accounted for by $\Pr[\Fail]$,
% which together give Observation~\ref{obs:hybrid-vs-naive}. The corresponding \emph{equality}
% of the one-step laws is in fact false: for any finite sampling budget, {\DistinctSample}
% inserts fewer than $N_i$ distinct elements with positive probability while the idealized step
% always inserts exactly $N_i$, so the two one-step distributions genuinely differ. The
% inequality of the Observation is thus the sharp statement.}

We now bound the two probabilities on the right-hand side.

\subsubsection*{Bounding $\Pr[\mathsf{Bad}_{\naive}]$}
For $i\in[\streamsize]$ and $j\ge 0$, let $Y_j^{(i)}$ denote the random subset of
$S^{(i)} = S_1\cup\cdots\cup S_i$ obtained by including each element independently with probability $2^{-j}$.
Algorithm~\ref{algo:s2} maintains the invariant that after processing $S_i$, we have
$\mathcal{X}=Y_j^{(i)}$ where $p=2^{-j}$.

For $1\le i\le \streamsize$ and $j\ge 0$, let $E_j^{(i)}$ be the event that after processing $S_i$ the value of $p$ is $2^{-j}$,
and let $A_j^{(i)}$ be the event that
\begin{align*}
|Y_j^{(i)}|
\notin
\Big[|S^{(i)}|2^{-j}(1-\eps),\,|S^{(i)}|2^{-j}(1+\eps)\Big].
\end{align*}
If $E_j^{(\streamsize)}$ occurs and $A_j^{(\streamsize)}$ does not, then
$\frac{|\mathcal{X}|}{p} \in [(1-\eps)|S^{(\streamsize)}|,(1+\eps)|S^{(\streamsize)}|]$. So
\begin{align*}
\Pr[\mathsf{Bad}_{\naive}]
\le
\Pr\left[\bigcup_{j\ge 0}\left(E_j^{(\streamsize)}\wedge A_j^{(\streamsize)}\right)\right].
\end{align*}

Let $j^*$ be the smallest integer such that
\begin{align*}
2^{-j^*} < \frac{\Threshold}{4|S^{(\streamsize)}|}.
\end{align*}
By a union bound,
\begin{align*}
\Pr[\mathsf{Bad}_{\naive}]
\le
\sum_{j=0}^{j^*-1}\Pr[A_j^{(\streamsize)}]
+
\Pr\left[\bigcup_{j\ge j^*}E_j^{(\streamsize)}\right].
\end{align*}

For $j\le j^*-1$, the expectation of $|Y_j^{(\streamsize)}|$ is $\rev{|S^{(\streamsize)}|}/2^j$ which is at least $2^{(j^*-1-j)}\frac{\Threshold}{4}$.
Thus by Chernoff bounds,
\begin{align*}
\Pr[A_j^{(\streamsize)}]
<
4\exp\left(-\frac{\eps^2\cdot 2^{(j^*-j-1)}\Threshold}{12}\right)
\le 4\left(\frac{\delta}{48}\right)^{2^{j^*-1-j}},
\end{align*}
using $\Threshold \ge 12\ln(48/\delta)/\eps^2$. Thus, 
\begin{align*}
\sum_{j=0}^{j^*-1}\Pr[A_j^{(\streamsize)}]
< \sum_{j=0}^{j^*-1} 4\left(\frac{\delta}{48}\right)^{2^{j^*-1-j}} < 
4\frac{\delta}{24} = \frac{\delta}{6}.
\end{align*}

The event $\bigcup_{j\ge j^*}E_j^{(\streamsize)}$ can occur only if, for some $i$,
$|Y_{j^*-1}^{(i)}|\ge\Threshold$ (note that 
$\mathbb{E}[|Y_{j^*-1}^{(i)}|]<\Threshold/2$). Thus from Chernoff bound, 
$$\rev{\Pr[|Y_{j^*-1}^{(i)}| \ge \Threshold]} < \Pr\big[\,\big|\,\rev{|Y_{j^*-1}^{(i)}|} - \mathbb{E}[\rev{|Y_{j^*-1}^{(i)}|}]\,\big| \geq \tfrac{\Threshold}{2}\,\big] <2\exp\left(-\frac{\Threshold}{6}\right) $$
Finally, using union bound over $i\in[\streamsize]$ we obtain
\begin{align*}
\Pr\left[\bigcup_{j\ge j^*}E_j^{(\streamsize)}\right]\le \frac{\delta}{6},
\end{align*}
using $\Threshold \ge 6(\ln(6/\delta)+\ln \streamsize)$.
Hence,
\begin{align*}
\Pr[\mathsf{Bad}_{\naive}] \le \frac{\delta}{3}.
\end{align*}

\subsubsection*{Bounding $\Pr[\Fail]$}
Fix an iteration $i$ and condition on $N_i=r$.
If $r\in\{0,1\}$, $\DistinctSample$ cannot fail.
If $r\ge 2$, then $\DistinctSample$ fails only if more than
$r\log r\log(6\streamsize/\delta)$ samples are needed to obtain $r$ distinct elements from $S_i$.
By the coupon collector bound (Theorem~\ref{thm:cc}),
\begin{align*}
\Pr[\text{{\DistinctSample} fails at iteration $i$}] \le \frac{2\delta}{3\streamsize}.
\end{align*}
A union bound over $i\in[\streamsize]$ yields
\begin{align*}
\Pr[\Fail] \le \frac{2\delta}{3}.
\end{align*}

\paragraph{Correctness of {\hybrid}.}
By Observation~\ref{obs:hybrid-vs-naive},
\begin{align*}
\Pr[\mathsf{Bad}_{\hybrid}]
\le
\frac{\delta}{3}+\frac{2\delta}{3}
=
\delta.
\end{align*}
Thus {\hybrid} outputs an $(\eps,\delta)$-estimate of $|S^{(\streamsize)}|$.

\paragraph{Space and Update Time Complexity.}
At all times $|\mathcal{X}|<\Threshold$, so the space usage is
$O(\Threshold\log|\universe|)$.
Each update performs at most $\Threshold$ membership queries and
$O(\Threshold\log\Threshold\log(3\streamsize/\delta))$ sampling queries, each costing
$O(\log|\universe|)$ time, yielding the stated bounds.
\end{proof}

\subsection{Klee's Measure Problem}\label{sec:klee}

We return to Klee's measure problem in the streaming setting and show that a
direct application of Theorem~\ref{thm:unknownM} yields the first algorithm with
space and update-time complexity polynomial in the dimension $d$. As a first step,
we show that the set of integer points induced by a rectangle is Delphic.

\begin{lemma}
For every $d$-dimensional axis-aligned rectangle $\mathbf{r}$,
the set $\range(\mathbf{r})$ belongs to the Delphic family.
\end{lemma}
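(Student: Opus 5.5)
We check the three queries of Definition~\ref{def:delphic} directly on the representation $(a_1,b_1,\ldots,a_d,b_d)$, with $|\universe| = |\Delta|^d$, so that $\log|\universe| = d\log\Delta$ is the target cost per query. The plan is to exploit the product structure: $\range(\mathbf{r}) = \prod_{k=1}^d \{a_k,\ldots,b_k\}$, a Cartesian product of $d$ integer intervals. Every query then decomposes coordinate-wise, and each coordinate costs $O(\log\Delta)$ (or $O(1)$ word operations in the unit-cost model).

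\emph{Size.} We compute $|\range(\mathbf{r})| = \prod_{k=1}^d (b_k-a_k+1)$, taking the product to be $0$ if some $b_k<a_k$. This uses $d$ subtractions and $d-1$ multiplications. The intermediate numbers have at most $d\log\Delta$ bits, so the cost is $O(d\log\Delta)$ word operations, possibly with an additional factor for multiplying big integers. One could worry about this factor under bit complexity. Under the paper's unit-cost model on $O(\log|\universe|)$-bit words it is fine, and I would simply invoke that model.

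\emph{Membership.} Given $x=(x_1,\ldots,x_d)$, we check $a_k\le x_k\le b_k$ for each $k$. This takes $2d$ comparisons of $\log\Delta$-bit numbers, i.e.\ $O(d\log\Delta)$ time.

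\emph{Uniform sampling.} Because the set is a product, the uniform distribution on it is the product of uniform distributions on the factors. We therefore draw independently, for each $k$, a uniform integer $x_k\in[a_k,b_k]$ and output $(x_1,\ldots,x_d)$. Each point then has probability $\prod_k (b_k-a_k+1)^{-1} = 1/|\range(\mathbf{r})|$.

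The only step requiring care, and the main obstacle, is sampling a uniform integer from a range whose length is not a power of two using random bits. The first approach I would try is rejection sampling: draw $\lceil\log(b_k-a_k+1)\rceil$ random bits and reject if the value is out of range. Each trial succeeds with probability at least $1/2$, so the expected cost is $O(\log\Delta)$. If worst-case time is needed, I would instead assume, as the paper's unit-cost model permits, a primitive that returns a uniform integer in a given range in unit time. This gives $O(d)$, hence $O(d\log\Delta)$, per sample.

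Together the three queries run in $O(\log|\universe|)$ time, which establishes that $\range(\mathbf{r})$ is Delphic. Substituting $\log|\universe| = d\log\Delta$ into Theorem~\ref{thm:unknownM} then gives Corollary~\ref{corr:klee}.
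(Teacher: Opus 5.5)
Your proof is correct and takes essentially the same route as the paper: it verifies the three Delphic queries coordinate-wise on the product of intervals, computing the size as $\prod_{k}(b_k-a_k+1)$, sampling each coordinate independently and uniformly from $[a_k,b_k]$, and checking membership with $2d$ comparisons. Your extra remarks on empty rectangles and on expected versus worst-case cost of sampling a uniform integer are refinements that the paper silently absorbs into its unit-cost model.
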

\begin{proof}
Let $\mathbf{r} = [a_1,b_1] \times \cdots \times [a_d,b_d]$ be a rectangle over
$\Omega = [\Delta]^d$. Note that $\range(\mathbf{r}) \subseteq \Omega$.
\begin{enumerate}
    \item The size of $\range(\mathbf{r})$ is
    $\prod_{i=1}^{d} (b_i - a_i+1)$, which can be computed in $O(d)$ time.
    \item To draw a uniform random sample
    $x = (x_1,\ldots,x_d) \in \range(\mathbf{r})$, we independently sample each
    coordinate $x_i$ uniformly from the interval $[a_i,b_i]$, which can be done
    in $O(d \log \Delta)$ time.
    \item Given $x = (x_1,\ldots,x_d)$, we can test whether
    $x \in \range(\mathbf{r})$ by checking that $a_i \le x_i \le b_i$ for all
    $i \in [d]$, which takes $O(d)$ time.
\end{enumerate}
All three operations are therefore supported efficiently, and
$\range(\mathbf{r})$ is Delphic.
\end{proof}

Since each rectangle $\mathbf{r}_i$ in the stream implicitly represents the set
$\range(\mathbf{r}_i)$, the following corollary follows immediately from
Theorem~\ref{thm:unknownM}.

\kleecorr*

The notion of range $[a_i,b_i]$ can be generalized to arithmetic progressions, which was studied previously by Pavan and Tirthapura
for $d=1$. Our sampling model allows us to derive streaming algorithms for a more general model comprising of $d$-dimensional
arithmetic progressions: Let $[a, b, c]$ represent the arithmetic progression with common difference $c$ in the range $[a, b]$, i.e., $a , a+c, a+2c, a + jc$, where $j$ is the largest integer such that $a+ jc \leq b$.  Consider a stream $\mathcal{R} = \langle\textbf{r}_1, \textbf{r}_2, \ldots \textbf{r}_M\rangle$ wherein each $\textbf{r}_i = [a_1, b_1, c_1] \times \cdots  \times [a_d, b_d, c_d]$. We generalize $\range(\textbf{r})$ to denote the set of tuples $\{(x_1, \ldots x_d)\}$ where $a_i \leq x_i \leq b_i$ and $x_i = a_i + k\cdot c_i$ for some non-negative integer k. Similarly, $\vol(\mathcal{R}) = | \bigcup_{i=1}^{m} \range(\textbf{r}_i)|$. To apply Theorem~\ref{thm:unknownM}, we first show that $\range(\textbf{r})$ of a  $d$-dimensional arithmetic progressions is Delphic. 

\begin{lemma}
For each $d$-dimensional arithmetic progression $\textbf{r}$,  the set $\range(\textbf{r})$ belongs to Delphic family. 
\end{lemma}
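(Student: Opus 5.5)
The plan mirrors the proof of the preceding lemma for rectangles, using the fact that $\range(\textbf{r})$ for $\textbf{r} = [a_1,b_1,c_1]\times\cdots\times[a_d,b_d,c_d]$ is a Cartesian product of one-dimensional progressions. Write $P_i = \{a_i + k c_i : k \ge 0,\ a_i + k c_i \le b_i\}$, so that $\range(\textbf{r}) = P_1 \times \cdots \times P_d \subseteq [\Delta]^d = \Omega$. The key observation is that each $P_i$ is in bijection with the integer interval $\{0,1,\ldots,j_i\}$ via $k \mapsto a_i + k c_i$, where $j_i = \lfloor (b_i - a_i)/c_i \rfloor$. All three Delphic queries then reduce, coordinatewise, to the corresponding queries on an interval of integers. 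Throughout I will assume $c_i \ge 1$ and $a_i \le b_i$; otherwise $P_i$ is empty or $\{a_i\}$, which is handled trivially.

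First, the counting query. Compute $j_i$ by one integer division per coordinate, giving
\begin{align*}
|\range(\textbf{r})| = \prod_{i=1}^{d} \left(\left\lfloor \frac{b_i - a_i}{c_i} \right\rfloor + 1\right).
\end{align*}
This costs $O(d)$ arithmetic operations in the unit-cost model.

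Second, the sampling query. For each $i$ independently, draw $k_i$ uniformly from $\{0,\ldots,j_i\}$ and output $x_i = a_i + k_i c_i$. Because the map $k\mapsto a_i+kc_i$ is a bijection onto $P_i$, each $x_i$ is uniform on $P_i$. Independence across coordinates then makes $(x_1,\ldots,x_d)$ uniform on the product. Drawing a uniform integer in a range of size at most $\Delta$ takes $O(\log \Delta)$ time, so the total is $O(d\log\Delta) = O(\log|\Omega|)$.

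Third, the membership query. Given $x$, check for each $i$ that $a_i \le x_i \le b_i$ and $(x_i - a_i) \bmod c_i = 0$. This takes $O(d)$ time.

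The only step needing any care is the uniformity and bijection argument in the sampling step. The subtlety is ensuring that the count formula and the sampler both use the same $j_i$, so that the endpoint $b_i$ is handled correctly when $c_i \nmid (b_i-a_i)$. Since $|\Omega| = \Delta^d$ gives $\log|\Omega| = d\log\Delta$, all three bounds fit the $O(\log|\Omega_n|)$ requirement of Definition~\ref{def:delphic}. This establishes the lemma, and Corollary~\ref{corr:kleearith} then follows from Theorem~\ref{thm:unknownM} exactly as Corollary~\ref{corr:klee} does.
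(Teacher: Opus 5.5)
Your proof is correct and follows the paper's argument essentially verbatim. You use the same product formula $\prod_{i}(\lfloor (b_i-a_i)/c_i\rfloor+1)$ for the count, the same coordinatewise sampling of $k_i$ uniformly from $\{0,\ldots,j_i\}$ with output $a_i+k_ic_i$, and the same coordinatewise bounds-plus-divisibility membership test, each within $O(d\log\Delta)=O(\log|\Omega|)$ time. Your version is slightly more careful: you justify uniformity explicitly via the bijection and independence across coordinates, and your test $(x_i-a_i)\bmod c_i=0$ correctly admits $k=0$, whereas the paper's wording ``for some positive integer $k$'' would wrongly exclude the point $a_i$.
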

\begin{proof}
Note that $\range(\textbf{r}) \subseteq [\Delta]^{d}$
\begin{enumerate}
    \item For a given $\textbf{r}$, the size of the set is simply $\prod_{i=1}^{d} \left( \floor{\frac{b_i - a_i}{c_i}} + 1\right)$, which can be computed in $O(d)$ time. 
    \item To draw a uniform sample $(x_1, \ldots x_d)$, $x_i$ is sampled uniformly at random from $[a_i,b_i, c_i]$. Note that to sample from  $[a_i,b_i, c_i]$, we first draw a number $k$ uniformly at random from $(\floor{\frac{b_i - a_i}{c_i}} +1)$, and then return $a_i + k\cdot c_i$
    \item Given $x =  (x_1, \ldots x_d)$, we can check if $ x \in \range(\textbf{r})$ by checking if for all $i$, $x_i \in a_i + k\cdot c_i$ for some positive integer $k$ and $x_i \leq b_i$, which can be accomplished in $O(d)$ time. 
\end{enumerate}
\end{proof}

We can now invoke Theorem~\ref{thm:unknownM} to obtain the following result.

\kleearith*

\subsection{Test Coverage Estimation}\label{sec:coverage}
We now focus on the test coverage estimation problem and provide the proof of
Corollary~\ref{corr:cov}. Through out this section, we assume that all computations use random access to the bits of the items $\mathbf{a}$ in the stream. 
We begin with the following lemma.

\begin{lemma}
For each $\mathbf{a}_i \in \{0,1\}^n$, the set $\Cov_t(\mathbf{a}_i)$ belongs to
the Delphic family, assuming there is random access to the input.
\end{lemma}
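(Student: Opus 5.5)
The plan is to verify the three Delphic queries of Definition~\ref{def:delphic} directly for $S = \Cov_t(\mathbf{a})$ with $\mathbf{a}\in\{0,1\}^n$. The universe is $\Omega = \{(T,\mathbf{y}) : T\in\binom{[n]}{t},\ \mathbf{y}\in\{0,1\}^t\}$, so $\log|\Omega| = \log\left(\binom{n}{t}2^t\right) = \Theta(t\log(n/t)+t)$. Each element of $\Omega$ is encoded as a sorted list of $t$ indices together with $t$ bits, i.e.\ with $O(t\log n)$ bits. The target is that each query runs in $O(t\log n)$ time, which matches the $t\log n$ factor appearing in Corollary~\ref{corr:cov}.

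\emph{Counting.} The key observation is that the map $T\mapsto (T,\mathbf{a}|_T)$ is a bijection from $\binom{[n]}{t}$ onto $\Cov_t(\mathbf{a})$. Indeed, for every $T$ there is exactly one $\mathbf{y}$ with $\mathbf{a}|_T=\mathbf{y}$. Hence $|\Cov_t(\mathbf{a})| = \binom{n}{t}$, independently of $\mathbf{a}$. I will note that this number can be computed in $O(t)$ arithmetic operations under the unit-cost model; if the word-size assumption is a concern, it has $O(t\log n)$ bits and can be computed incrementally.

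\emph{Sampling.} By the bijection above, a uniform sample from $\Cov_t(\mathbf{a})$ is obtained by drawing a uniformly random $t$-subset $T\subseteq[n]$ and returning $(T,\mathbf{a}|_T)$. Reading $\mathbf{a}|_T$ uses $t$ random-access probes of the input, which is where the random-access assumption is needed. To draw $T$, I would use Floyd's algorithm: for $j=n-t+1,\dots,n$, pick $r$ uniformly from $[j]$ and insert $r$ if it is new, otherwise insert $j$. Each step needs $O(\log n)$ random bits and a membership check in a balanced structure of size at most $t$, so the total time is $O(t\log n)$. This is the step needing most care: a naive rejection-based sampler gives only expected time bounds, whereas Floyd's method is worst-case and provably uniform. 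Finally, sort $T$ into canonical form within the same budget.

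\emph{Membership.} Given $x=(T,\mathbf{y})$, check that $T$ consists of $t$ distinct indices in $[n]$ (it is sorted, so this takes linear time). Then probe $\mathbf{a}$ at each index of $T$ and compare with $\mathbf{y}$. This takes $O(t\log n)$ time. All three queries therefore run in $O(t\log n)=O(\log|\Omega|)$ time up to the $\log n$ versus $\log(n/t)$ slack, which is absorbed into the stated bound, so $\Cov_t(\mathbf{a})$ is Delphic.
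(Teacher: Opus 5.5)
Your proposal is correct and follows the paper's proof. It uses the same three queries: $|\Cov_t(\mathbf{a})| = \binom{n}{t}$ because each $T$ determines a unique $\mathbf{y}$; a uniform sample is a uniform $t$-subset $T$ followed by reading $\mathbf{a}|_T$ via random access; and membership is checked by restricting and comparing. Each takes $O(t\log n)$ time, and your explicit bijection and Floyd's-algorithm subset sampler only fill in details the paper leaves implicit. The $\log n$ versus $\log(n/t)$ slack you flag is resolved, implicitly in the paper, by taking the universe to be the $O(t\log n)$-bit encoding space $\{0,1\}^{O(t\log n)}$ rather than the set of pairs. Then $\log|\Omega| = \Theta(t\log n)$, and the $O(t\log n)$ query time meets Definition~\ref{def:delphic} exactly instead of being "absorbed" informally.
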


\begin{proof}
Recall that
$
\Cov_t(\mathbf{a}_i)
= \{(T,\mathbf{y}) \mid T \subseteq [n], |T|=t, \mathbf{y} \in \{0,1\}^t,
\text{ and the }\text{ restriction of } \mathbf{a}_i \text{ to } T \text{ equals }
\mathbf{y}\}.
$
\begin{enumerate}
    \item The size of $\Cov_t(\mathbf{a}_i)$ is exactly $\binom{n}{t}$, since for
    every $T \subseteq [n]$ with $|T|=t$ there is a unique string
    $\mathbf{y} \in \{0,1\}^t$ induced by $\mathbf{a}_i$. This value can be
    computed in $O(t\log n)$ time.
    \item To draw a uniform random sample from $\Cov_t(\mathbf{a}_i)$, we sample
    a subset $T \subseteq [n]$ of size $t$ uniformly at random, and then return
    the pair $(T,\mathbf{y})$, where $\mathbf{y}$ is the restriction of
$\mathbf{a}_i$ to the coordinates in $T$. This can be implemented in
    $O(t \log n)$ time, since there is random access to the input.
    \item Given a pair $(T,\mathbf{y})$, we can test whether
    $(T,\mathbf{y}) \in \Cov_t(\mathbf{a}_i)$ by computing the restriction of
    $\mathbf{a}_i$ to $T$ and comparing it with $\mathbf{y}$, which takes
    $O(t \log n)$ time.
\end{enumerate}
All three operations are therefore supported efficiently, and
$\Cov_t(\mathbf{a}_i)$ is Delphic.
\end{proof}

Since each $\mathbf{a}_i$ implicitly represents the set
$\Cov_t(\mathbf{a}_i)$, Corollary~\ref{corr:cov} follows immediately from
Theorem~\ref{thm:unknownM}.

\covercorr*

We next investigate whether the space complexity in
Corollary~\ref{corr:cov} is optimal. The following observation shows that the
space complexity can be improved, albeit at the cost of significantly larger
update time.

\begin{observation}\label{obs:bestspace}
There is a streaming algorithm that, given a stream
$\mathcal{A} = \langle \mathbf{a}_1, \ldots, \mathbf{a}_M \rangle$, computes an
$(\varepsilon,\delta)$-approximation of  $|\Cov_t(\mathcal{A})|$ using
$O\!\left((t \log n + \varepsilon^{-2}) \log \frac{1}{\delta}\right)$ space and
$O(n^t)$ update time.
\end{observation}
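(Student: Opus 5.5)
The plan is to reduce to the classical $\mathbb{F}_0$ (distinct elements) problem and run an optimal-space $\mathbb{F}_0$ sketch on the explicit expansion of each set $\Cov_t(\mathbf{a}_i)$. The universe is $\Omega = \{(T,\mathbf{y}) : T\in\binom{[n]}{t},\ \mathbf{y}\in\{0,1\}^t\}$. Its size is $\binom{n}{t}2^t$, so each element can be encoded in $O(t\log n)$ bits: list the $t$ indices of $T$, each in $\lceil\log n\rceil$ bits, followed by the $t$ bits of $\mathbf{y}$. Then $|\Cov_t(\mathcal{A})|$ is exactly the number of distinct elements in the derived stream. That stream is obtained by replacing each $\mathbf{a}_i$ with the $\binom{n}{t}$ pairs $(T,\mathbf{a}_i|_T)$, enumerated in some fixed order over $T$.

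The steps are as follows.
\begin{enumerate}
\item Fix the encoding of $\Omega$ into $\{0,1\}^{m}$ with $m=O(t\log n)$, so that $\log|\Omega| = O(t\log n)$.
\item Instantiate the $\mathbb{F}_0$ algorithm of Kane, Nelson, and Woodruff~\cite{KNW10}. Over a universe $\Omega$, it gives a $(\varepsilon,1/3)$-approximation using $O(\log|\Omega|+\varepsilon^{-2})$ space and $O(1)$ update time per element. Here this is $O(t\log n + \varepsilon^{-2})$ space.
\item On arrival of $\mathbf{a}_i$, enumerate all $T\in\binom{[n]}{t}$ using a counter of $O(t\log n)$ bits (reusable, so it does not change the bound). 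Compute $\mathbf{y}=\mathbf{a}_i|_T$ using random access to the input tape, and feed $(T,\mathbf{y})$ into the sketch. There are $\binom{n}{t}\le n^t$ subsets, each costing $O(1)$ sketch time plus $O(t)$ to form the element under the unit-cost word model. This gives $O(n^t)$ update time up to the factor needed to write the element. I would absorb that factor by noting that successive subsets in a revolving-door or lexicographic enumeration differ in $O(1)$ positions amortized, or simply state it under the unit-cost word model adopted in Section~\ref{sec:notations}.
\item Amplify the success probability from $2/3$ to $1-\delta$. Run $O(\log\frac{1}{\delta})$ independent copies and output the median. A Chernoff bound on the number of copies that fail gives the $(\varepsilon,\delta)$ guarantee, with total space $O((t\log n+\varepsilon^{-2})\log\frac1\delta)$.
\end{enumerate}

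The main obstacle is keeping the update time honestly at $O(n^t)$ rather than $O(t\,n^t)$ with $\log\frac1\delta$ copies. I would handle this in two ways. First, the $\log\frac1\delta$ factor can either be accepted as a constant for fixed $\delta$, or avoided by using KNW's direct $\delta$-dependence. Second, computing $\mathbf{a}_i|_T$ incrementally along a Gray-code-style enumeration of $t$-subsets changes one index per step, so each new element is obtained in $O(1)$ word operations. The correctness argument itself is immediate, since the derived stream's distinct count equals $|\Cov_t(\mathcal{A})|$ exactly.
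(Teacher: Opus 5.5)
Your proposal is correct and is essentially the paper's proof. Enumerate the $\binom{n}{t}$ pairs $(T,\mathbf{a}_i|_T)$ of each $\Cov_t(\mathbf{a}_i)$, encode them in $O(t\log n)$ bits, and feed them to the Kane--Nelson--Woodruff $\mathbb{F}_0$ sketch, amplified to confidence $1-\delta$ by a median over $O(\log\frac{1}{\delta})$ copies.

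Your Gray-code enumeration is unnecessary: the $O(t)$ cost of forming each element is absorbed directly, since $t\binom{n}{t}=n\binom{n-1}{t-1}\le n^t$. The $\log\frac{1}{\delta}$ update-time factor from the median copies is a different matter. It cannot be removed via Kane--Nelson--Woodruff, whose high-confidence version is plain repetition, but the paper's proof glosses over it in exactly the same way by asserting constant per-element update time.
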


\begin{proof}
For each incoming item $\mathbf{a}_i$, we explicitly enumerate all
$\binom{n}{t}$ pairs $(T,\mathbf{y}) \in \Cov_t(\mathbf{a}_i)$ and feed them into
a data stream. We then apply the $\mathbb{F}_0$-estimation algorithm of Kane,
Nelson, and Woodruff~\cite{KNW10} to estimate the number of distinct elements in
the resulting stream.

The reduction requires $O(t \log n)$ space to represent each element
$(T,\mathbf{y})$ and $O(n^t)$ time per update due to explicit enumeration. The
$\mathbb{F}_0$ algorithm itself uses
$O\!\left((t \log n + \varepsilon^{-2}) \log \frac{1}{\delta}\right)$ space and
constant update time per element, yielding the stated bounds.
\end{proof}

This observation raises the natural question of whether one can achieve a more
favorable trade-off between space and update time. In particular, we seek an
algorithm whose space complexity improves upon
Corollary~\ref{corr:cov} while avoiding the exponential update time of
Observation~\ref{obs:bestspace}. To this end, we next present a hashing-based
approach that replaces the $O(n^t)$ update time with polynomially many calls to
an NP oracle.

\paragraph{Alternate view of $\Cov_t(\mathbf{a})$.}
Recall, each element of $\Cov_t(\mathbf{a})$ is naturally represented as a pair
$(T,\mathbf{y})$, where $T$ is an ordered $t$-subset of $[n]$ and
$\mathbf{y}\in\{0,1\}^t$. We encode such a pair using a binary string of length
$O(t\log n)$ by representing the elements of $T$ in sorted order together with
the string $\mathbf{y}$. Accordingly, we view the universe of coverage elements
as a subset of $\{0,1\}^{O(t\log n)}$.

\paragraph{Overview of the algorithm.}
Algorithm~\ref{alg:hashing} can be viewed as an adaptation of the classical
$F_0$-estimation algorithm of Gibbons and Tirthapura~\cite{GT01}. In our setting,
each stream element $\mathbf{a}$ implicitly represents the associated set
$\Cov_t(\mathbf{a})$. The algorithm selects
$\mathsf{R} = O(\log(1/\delta))$ pairwise independent hash functions
$H[1],\ldots,H[\mathsf{R}]$.

For each hash function $H[i]$, the algorithm maintains:
(i) an integer level parameter $m[i] \ge 0$, and
(ii) a set $\mathcal{X}[i]$ containing elements
$z \in \bigcup_j \Cov_t(\mathbf{a}_j)$ such that
$H[i]_{m[i]}(z) = 0^{m[i]}$.
To ensure bounded space usage, the algorithm enforces the invariant
$|\mathcal{X}[i]| < \thresh$ for a suitable threshold $\thresh$.

Upon receiving a new stream element $\mathbf{a}$, the algorithm attempts to add
the elements of
$H[i]_{m[i]}^{-1}(0^{m[i]}) \cap \Cov_t(\mathbf{a})$
to $\mathcal{X}[i]$. If adding these elements would violate the invariant, the
algorithm increments $m[i]$ and refines $\mathcal{X}[i]$ by intersecting it with
$H[i]_{m[i]}^{-1}(0^{m[i]})$. This process repeats until the invariant is
restored.

\algrenewcommand\algorithmicindent{1em}

\begin{algorithm}
\caption{$\hashing$}\label{alg:hashing}
\begin{algorithmic}[1]
\State $\thresh \gets 1 + 9.84(1 + 1/\varepsilon^2)$
\State $\mathsf{R} \gets 35\log(1/\delta)$
\State $m[1:\mathsf{R}] \gets 0$; $\mathcal{X}[1:\mathsf{R}] \gets \emptyset$
\State $H \gets \mathrm{ChooseHashFunctions}(\mathsf{R})$
\While{\textbf{not} end of stream}
    \State $\mathbf{a} \gets \textsc{input}()$
    \For{$i = 1$ to $\mathsf{R}$}
        \While{\textbf{true}}
            \If{$\bigl|
            (H[i]_{m[i]}^{-1}(0^{m[i]}) \cap \Cov_t(\mathbf{a})) \cup \mathcal{X}[i]
            \bigr| 
            < \thresh $}\label{line:hashing-check}
                \State $\mathcal{X}[i] \gets
                \mathcal{X}[i] \cup
                (H[i]_{m[i]}^{-1}(0^{m[i]}) \cap \Cov_t(\mathbf{a}))$
                \State \textbf{break}
            \Else
                \State $m[i] \gets m[i] + 1$\label{line:hashing-fix-begin}
                \State $\mathcal{X}[i] \gets
                \mathcal{X}[i] \cap H[i]_{m[i]}^{-1}(0^{m[i]})$
                \label{line:hashing-fix-refine}
            \EndIf
        \EndWhile
    \EndFor
\EndWhile
\State \Return
$\mathrm{Median}\bigl(\{|\mathcal{X}[i]| \cdot 2^{m[i]}\}_{i=1}^{\mathsf{R}}\bigr)$
\end{algorithmic}
\end{algorithm}

The following lemma proves correctness of {\hashing}. 
\begin{restatable}{lemma}{hashingcorrect}\label{lem:hashing-correctness}
Let $
c := \mathrm{Median}\bigl(\{|\mathcal{X}[i]| \cdot 2^{m[i]}\}_{i=1}^{\mathsf{R}}\bigr).$
Then
\begin{align*}
\Pr\left[
\frac{|\Cov_t(\mathcal{A})|}{1+\varepsilon}
\le c \le
(1+\varepsilon)|\Cov_t(\mathcal{A})|
\right] \ge 1-\delta.
\end{align*}
\end{restatable}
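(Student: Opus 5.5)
Fix an index $i\in[\mathsf{R}]$ and write $h=H[i]$, $C=|\Cov_t(\mathcal{A})|$. The plan is the standard Gibbons--Tirthapura style analysis. First we show that each single estimate $|\mathcal{X}[i]|\cdot 2^{m[i]}$ is good with probability at least some constant strictly above $1/2$. Then we boost this to $1-\delta$ with the median over $\mathsf{R}=35\log(1/\delta)$ independent copies, via a Chernoff bound on the number of bad copies.

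\textbf{Deterministic characterization.} By induction over the stream and the inner loop, the pair $(m[i],\mathcal{X}[i])$ at the end is determined by $h$ alone. We have $\mathcal{X}[i]=\Cov_t(\mathcal{A})\cap h_{m[i]}^{-1}(0^{m[i]})$: elements are added exactly when they hash to $0^{m[i]}$, prefix-slices are nested, and refinement keeps precisely the survivors. Moreover, $m[i]$ is the smallest level $m$ with $|\Cov_t(\mathcal{A})\cap h_m^{-1}(0^m)|<\thresh$. It is at least that level, since $m$ only increments when the current level's set would reach $\thresh$; and since the sets $h_m^{-1}(0^m)\cap\Cov_t(\mathcal{A}_{\le j})$ are monotone in the prefix $j$, the level never overshoots the final minimum. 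More precisely, $m[i]$ never exceeds the least level $m$ such that $|\Cov_t(\mathcal{A}_{\le j})\cap h_m^{-1}(0^m)|<\thresh$ for every prefix $j$. This suffices, because by monotonicity in $j$ that level equals the least level for the full stream.

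\textbf{Single-copy bound.} For a level $m$, let $U_m=|\Cov_t(\mathcal{A})\cap h_m^{-1}(0^m)|$. Then $\mathbb{E}[U_m]=C/2^m$ and $\mathrm{Var}[U_m]\le C/2^m$ by pairwise independence of $\Hteop$. Let $m^*$ be the level with $\thresh/4\le C/2^{m^*}<\thresh/2$; if $C<\thresh$, then $m[i]=0$ and the estimate is exact. The bad event is contained in the union of the following.
\begin{enumerate}[label=(\roman*)]
\item For some $m\le m^*$, $U_m 2^m$ deviates from $C$ by a factor beyond $1+\varepsilon$. For $m\le m^*$ we have $\mathbb{E}[U_m]\ge 2^{m^*-m}\thresh/4$. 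Chebyshev gives a bound of order $\frac{(1+\varepsilon)^2}{\varepsilon^2\,\mathbb{E}[U_m]}\le \frac{4(1+1/\varepsilon)^2\cdot\text{const}}{2^{m^*-m}\thresh}$, and the geometric sum over $m$ is at most twice the $m=m^*$ term.
\item $m[i]>m^*$, which forces $U_{m^*}\ge\thresh$. Since $\mathbb{E}[U_{m^*}]<\thresh/2$, Chebyshev bounds this by $\frac{\thresh/2}{(\thresh/2)^2}=2/\thresh$.
\end{enumerate}
With $\thresh=1+9.84(1+1/\varepsilon^2)$ the total should come out to at most roughly $0.36$. I would then verify that the constants $9.84$ and $35$ are tuned exactly so that Chernoff on $\mathsf{R}$ independent trials, each failing with probability below $1/2-\gamma$, yields failure probability at most $\delta$.

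\textbf{Main obstacle.} The delicate step is the Chebyshev computation in (i). The multiplicative window $[C/(1+\varepsilon),(1+\varepsilon)C]$ is asymmetric, and $\thresh$ uses $1+1/\varepsilon^2$ rather than $(1+1/\varepsilon)^2$, so matching the specific constant $9.84$ needs care. I would use the lower-side deviation $\varepsilon/(1+\varepsilon)\cdot\mathbb{E}$, bound it by $\varepsilon$-type terms, and check numerically that the union of (i) and (ii) stays below the constant needed by the median amplification.
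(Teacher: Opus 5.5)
Your outer structure matches the paper's: the characterization $\mathcal{X}[i]=\Cov_t(\mathcal{A})\cap H[i]_{m[i]}^{-1}(0^{m[i]})$ with $m[i]$ the least level whose count is below $\thresh$, a per-copy failure probability below $1/2$, and median amplification. The gap is the per-copy bound. As you set it up, it cannot produce a constant below $1/2$ with the prescribed $\thresh$, so the median step has nothing to amplify. Term (i) is a union bound of two-sided Chebyshev estimates over all levels $m\le m^*$, and only $\mu_m\ge 2^{m^*-m}\thresh/4$ is available. These estimates add up to as much as
\[
\frac{8(1+\varepsilon)^2}{\varepsilon^2\,\thresh}=\frac{8(1+\varepsilon)^2}{9.84+10.84\,\varepsilon^2}\;\ge\;\frac{8}{9.84}\approx 0.81
\]
for every $\varepsilon\in(0,1)$. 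Your estimate for the single level $m=m^*$ is already at least $4/9.84\approx 0.41$ when $\mu_{m^*}$ is near $\thresh/4$. So the ``roughly $0.36$'' does not follow from your decomposition, and this is not an arithmetic issue. A union bound over all levels plus a separate overshoot term is essentially the Gibbons--Tirthapura analysis, which, as the paper remarks, needs a larger constant in $\thresh$.

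The missing idea is to charge a deviation only at the level where the algorithm actually stops, and to exploit the nesting of the prefix slices. With $B_{i,j}=\{\Cnt{i}{j}\le\thresh-1\}$ one has $B_{i,j}\Rightarrow B_{i,j+1}$, and $\bad_i\subseteq\bigcup_j\bigl(\overline{B_{i,j-1}}\cap B_{i,j}\cap\{\text{the level-}j\text{ estimate is bad}\}\bigr)$. The paper takes $m^*$ to be the least $j$ with $(1+\varepsilon)\mu_j\le\thresh-1$. This guarantees $\mu_{m^*}>(\thresh-1)/(2(1+\varepsilon))$, roughly twice your $\thresh/4$ for small $\varepsilon$. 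The cases then collapse as follows. Stopping at any $j\le m^*-3$ lies in the single far-lower-tail event $B_{i,m^*-3}$. At $j\in\{m^*-2,m^*-1\}$ one has $(1+\varepsilon)\mu_j>\thresh-1$, so overestimation is incompatible with stopping there. Only the lower tails $\{\Cnt{i}{j}<\mu_j/(1+\varepsilon)\}$ remain, and these are bounded one-sidedly (Cantelli/Paley--Zygmund). Finally, overshooting past $m^*$ forces $\Cnt{i}{m^*}\ge\thresh>(1+\varepsilon)\mu_{m^*}$, so your term (ii) is absorbed into the two-sided deviation event at $m^*$. This leaves four terms, which the paper bounds by $1/62.5$, $1/20.68$, $1/10.84$ and $1/4.92$, about $0.36$ in total.

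Your caution about $1+1/\varepsilon^2$ versus $(1+1/\varepsilon)^2$ is well placed. Plain Chebyshev at $m^*$ gives $2(1+\varepsilon)^3/\bigl(9.84(1+\varepsilon^2)\bigr)$, which reaches $1/4.92$ only as $\varepsilon\to 0$, and the paper does not spell out that bookkeeping for larger $\varepsilon$. But without the stopping-level decomposition your bound exceeds $1/2$ even in that most favorable limit.
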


While allowing larger constants in the definition of $\thresh$ would permit a
direct application of the analysis of Gibbons and Tirthapura~\cite{GT01}, we
instead provide an alternate proof inspired by techniques of Chakraborty, Meel,
and Vardi~\cite{CMV16} and Meel--Akshay~\cite{MA20}. The proof exploits the nested
structure of the sets $H[i]_{m[i]}^{-1}(0^{m[i]})$, which has been observed to
yield improved constants in hashing-based approximate counting. The proof is
deferred to the appendix.

\paragraph{Time complexity.}
The dominant cost of the algorithm arises from evaluating the condition in
line~\ref{line:hashing-check}. Membership testing
$z \in \Cov_t(\mathbf{a})$ can be performed in $O(t\log n)$ time using the
representation described above.

The following lemma, adapted from Lemma~3.7 of Bellare, Goldreich, and
Petrank~\cite{Bellare00}, characterizes the complexity of this check.

\begin{lemma}\label{lem:hashing-complexity}
Given a hash function $h \in \Hteop(t\log n,m)$ and a set $\mathcal{X}$, there exist
languages $M_1,M_2 \in \mathrm{NP}$ and a polynomial-time oracle algorithm
$\mathcal{A}^{M_1,M_2}$ such that, on input $(h,\mathbf{a},m,\thresh,\mathcal{X})$,
the algorithm outputs $0$ if
$\bigl|(h^{-1}(0^m) \cap \Cov_t(\mathbf{a})) \cup \mathcal{X}\bigr| \ge \thresh$,
and otherwise enumerates the elements of
$h^{-1}(0^m) \cap \Cov_t(\mathbf{a})$.
The algorithm makes $O(\thresh \cdot t \cdot \log n)$ oracle calls and uses
$O(\thresh\cdot t\cdot \log n)$ space.
\end{lemma}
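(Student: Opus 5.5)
The plan is the standard prefix-search enumeration of Bellare, Goldreich, and Petrank, run against $\Cov_t(\mathbf{a})$. Encode a candidate element $z\in\{0,1\}^{O(t\log n)}$ as a pair $(T,\mathbf{y})$ with $T$ sorted. The key point is that checking $z\in\Cov_t(\mathbf{a})$ and checking $h(z)=0^m$ both take polynomial time: the first by reading the $t$ coordinates of $\mathbf{a}$ indexed by $T$, the second by one Toeplitz matrix--vector product over $\mathbb{F}_2$. Let $W=h^{-1}(0^m)\cap\Cov_t(\mathbf{a})$.

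We define two NP languages.
\begin{itemize}
\item $M_1$ consists of tuples $(h,\mathbf{a},m,\mathcal{X},\thresh)$ such that at least $\thresh$ distinct strings lie in $W\cup\mathcal{X}$. Equivalently, there is a $k\le\thresh$ with $|\mathcal{X}|\ge\thresh-k$ and there are $k$ distinct elements of $W\setminus\mathcal{X}$. The witness is the list of those $k$ strings, which has size $O(\thresh\, t\log n)$ and is checked in polynomial time.
\item $M_2$ consists of tuples $(h,\mathbf{a},m,\mathcal{X}',\rho)$ such that some $z\in W\setminus\mathcal{X}'$ has prefix $\rho$. This is a prefix-extension language, and the witness is $z$ itself.
\end{itemize}

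The algorithm first queries $M_1$ once and outputs $0$ if the answer is yes. Otherwise $|W\cup\mathcal{X}|<\thresh$, so $|W|<\thresh$.

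It then enumerates $W$ one element at a time. It keeps a found set $F$, initially empty. To extract a new element of $W\setminus F$, it grows a prefix $\rho$ bit by bit: at each step it queries $M_2$ with $\mathcal{X}'=F$ and prefix $\rho 0$, and extends $\rho$ by $0$ if the answer is yes and by $1$ otherwise. Before starting, a single query with empty prefix tells whether $W\setminus F$ is empty; if it is, the algorithm stops. Because a yes at the root guarantees some extension exists at every step, this walk reaches a full string $z\in W\setminus F$ after $O(t\log n)$ queries. The algorithm adds $z$ to $F$ and repeats.

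At most $\thresh$ extractions occur, so the total number of oracle calls is $O(\thresh\cdot t\log n)$, plus one call to $M_1$. The working storage is $F$, $\mathcal{X}$ and the current prefix, which is $O(\thresh\cdot t\log n)$ bits.

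The step needing the most care is $M_1$: it must capture ``the union has size at least $\thresh$'' and not merely ``$|W|\ge\thresh$'', so it has to account for overlap between $W$ and $\mathcal{X}$. Taking the witness to be elements of $W$ that are distinct and outside $\mathcal{X}$, and comparing their number with $\thresh-|\mathcal{X}|$, handles this. The remaining work is to confirm that every check inside the verifiers is polynomial in the input size, which holds since $|\mathcal{X}|<\thresh$ and $\thresh$ is part of the input.
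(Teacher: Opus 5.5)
Your proposal is correct and takes the route the paper intends. The paper omits the proof and defers to the Bellare--Goldreich--Petrank prefix-search argument (their Lemma~3.7); your threshold language $M_1$, certified by distinct elements of $W\setminus\mathcal{X}$, together with your prefix-extension language $M_2$ that extracts the fewer than $\thresh$ elements of $W$ one at a time, is exactly that construction adapted to $\Cov_t(\mathbf{a})$. One point should be stated explicitly: $\thresh$ (here $O(1/\varepsilon^2)$) must be encoded in unary, or otherwise be polynomially bounded in the input length, because it is this, not $|\mathcal{X}|<\thresh$, that keeps the $k\le\thresh$ witness strings for $M_1$ polynomial in size.
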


The proof follows the structure of~\cite[Lemma~3.7]{Bellare00} and is omitted.

Combining Lemmas~\ref{lem:hashing-correctness} and
\ref{lem:hashing-complexity} establishes Theorem~\ref{thm:smallspace}.

\paragraph{Practical considerations.}
Although the algorithm relies on NP-oracle access, this does not preclude
practical applicability. Modern SAT solvers routinely handle instances with
millions of variables~\cite{MLM09}. Moreover, the oracle queries in
Algorithm~\ref{alg:hashing} can be expressed as CNF-XOR formulas, which have
received significant attention in the context of hashing-based model
counting~\cite{Stockmeyer83,CMV13b,CMV16,SM19,SGM20}. We leave the development of
practical implementations to future work.

\subsection{Model Counting of DNF}\label{sec:dnf}
Consider a DNF formula $\varphi = D_1 \vee D_2 \vee \ldots \vee D_M$, wherein each
$D_i$ is a term defined over a set of $n$ Boolean variables. We denote the set of
all satisfying assignments of $\varphi$ by $\satisfying{\varphi}$. Given
$\varphi$, the problem of model counting seeks to compute
$|\satisfying{\varphi}|$. We focus on the problem of model counting for DNF
formulas in the streaming setting, i.e., where the terms $D_i$ arrive one by one
over a stream. We first begin with the following lemma.

\begin{lemma}
For each $D_i$, $\satisfying{D_i}$ belongs to the Delphic family.
\end{lemma}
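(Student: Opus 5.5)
We verify the three queries of Definition~\ref{def:delphic} directly for $\satisfying{D_i}$, in the same way as for rectangles and coverage sets. Write $D_i = \ell_1 \wedge \cdots \wedge \ell_k$ over the variables $X = \{x_1,\ldots,x_n\}$, so $\Omega_n = \{0,1\}^n$ and $\log|\Omega_n| = n$.

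As a preprocessing step, we scan the literals once and record, for each variable, whether it occurs positively, negatively, both, or not at all. This costs $O(k) \le O(n)$ time once duplicate literals are collapsed. If some variable occurs with both polarities, then $D_i$ is contradictory and $\satisfying{D_i}=\emptyset$. In that case the size query returns $0$, membership always answers no, and sampling is vacuous; the algorithm $\hybrid$ only samples when $N_i \ge 1$, which cannot happen when $|S_i|=0$. Otherwise, let $V_i$ be the set of variables occurring in $D_i$ and $k' = |V_i|$. The satisfying assignments are then exactly those that fix each variable of $V_i$ to the value making its literal true and assign the remaining $n-k'$ variables arbitrarily.

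\textbf{Size.} We have $|\satisfying{D_i}| = 2^{n-k'}$, an $n$-bit number, which is computed in $O(n)$ time in the unit-cost word model.

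\textbf{Sampling.} We set each variable in $V_i$ to its forced value and choose each free variable independently with a fair coin. Every satisfying assignment is output with probability exactly $2^{-(n-k')}$, so the sample is uniform, and the cost is $O(n)$ time.

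\textbf{Membership.} Given $x \in \{0,1\}^n$, we evaluate every literal of $D_i$ on $x$ in $O(k) = O(n)$ time.

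All three operations run in $O(n) = O(\log|\Omega_n|)$ time, so $\satisfying{D_i}$ is Delphic. The only point that needs care is the degenerate contradictory term, together with the convention for duplicate literals when counting $k'$. Neither is a real obstacle, since the preprocessing pass resolves both. Corollary~\ref{corr:dnf} then follows by substituting $\log|\Omega_n| = n$ into Theorem~\ref{thm:unknownM}.
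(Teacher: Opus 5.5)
Your proposal is correct and is essentially the paper's proof: the size is $2^{n-|D_i|}$, a uniform sample is obtained by fixing the bits forced by the literals and choosing the remaining bits with fair coins, and membership is checked by evaluating the literals, all in $O(n)=O(\log|\Omega_n|)$ time. Your preprocessing for contradictory terms and repeated literals simply makes explicit what the paper tacitly assumes when it writes $|\satisfying{D_i}|=2^{n-|D_i|}$ with $|D_i|$ the number of literals.
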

\begin{proof}
Note that $\satisfying{D_i} \subseteq \{0,1\}^n$. Let $|D_i|$ denote the number of
literals in the term $D_i$.
\begin{itemize}
    \item $|\satisfying{D_i}| = 2^{n-|D_i|}$, which can be computed in $O(n)$ time.
    \item Drawing a uniform sample from $\satisfying{D_i}$ amounts to drawing
    $n-|D_i|$ bits uniformly at random, which can be accomplished in $O(n)$ time.
    \item Finally, checking whether $x \in \satisfying{D_i}$ can be accomplished
    in $O(n)$ time.
\end{itemize}
\end{proof}

Since each $D_i$ implicitly represents $\satisfying{D_i}$, the following
corollary follows from Theorem~\ref{thm:unknownM}.

\dnfcorr*

\section{Conclusion and Future Outlook}\label{sec:conclusion}

To summarize, our investigations led us to design a surprisingly simple yet efficient scheme for computing the size of the union of sets belonging to a Delphic family. We then show that the notion of Delphic sets can capture three fundamental problems in the streaming setting: Klee's measure problem, coverage estimation problem, and DNF counting. For each of these problems, we provide efficient streaming algorithms.

Crucially, we believe the simplicity of our scheme should make it amenable to practical implementation and adoption. From a technical perspective, we sketch out three directions of interest:

\begin{description}

\item[Generalization of Delphic Sets] In this work, we limited our focus to three problems to showcase the generalizability of the notion of Delphic sets. As a future work, an interesting direction of work would be to study other streaming problems that reduce to Delphic sets. To this end, one line of work would be to relax the requirement of $O(\log |\Omega|)$ time to $O(\log |\Omega|)^{O(1)}$ for membership, counting, and sampling queries. 

\item[Higher Moments] the computation of union of sets corresponds to $\mathbb{F}_0$ (0-th frequency moment) estimation. In this context, a natural question would be whether we can generalize our sampling-based strategy for $\mathbb{F}_k$, i.e., k-th frequency moment estimation. 

\item[Beyond Insertion-Only Streams] The framework presented in this paper handles insertion-only streams. The past two decades have witnessed a long line of work on richer turnstile models that allow deletion. Therefore, an interesting direction of future work would be to explore sampling-based framework for turnstile model. 

\item [Complexity independent of M] The space and time complexity of {\hybrid} has logarithmic dependence on $M$, which is in line with the $O^{*}(1)$ notation introduced by Tirthapura and Woodruff ~\cite{TW12}. However, observe that we can cast the distinct element problem as a special case of union of Delphic sets wherein every set is simply a singleton. In case of distinct element problem, the algorithms without dependence on $M$ are known. Therefore, an interesting direction for future work would be to investigate whether sampling-based framework can lead to algorithms whose space and update time complexity are independent of $M$. 

\end{description}

\begin{acks}
The revised  (and arguably much cleaner) proof of {$\hybrid$} is based on the proof sketch due to Divesh Aggarwal and Maciej Obremski, which is published in a follow-up work on DNF counting~\cite{SACMO23}.
We thank the anonymous reviewers of PODS 21 for valuable comments. Meel was supported in part by National Research Foundation Singapore under its NRF Fellowship Programme[NRF-NRFFAI1-2019-0004 ] and AI Singapore Programme [AISG-RP-2018-005], and NUS ODPRT Grant [R-252-000-685-13].
Vinodchandran was supported in part by NSF CCF-184908 and NSF HDR:TRIPODS-1934884 awards.
\end{acks}

\clearpage

\newpage

\section*{Appendix}

We provide a Proof of Lemma~\ref{lem:hashing-correctness}. We first restate the lemma below.

\hashingcorrect*

\begin{proof}

Let
\begin{align*}
N &:= |\Cov_t(\mathcal{A})|, \qquad
\alpha := \frac{\varepsilon}{1+\varepsilon}.
\end{align*}
Fix an arbitrary ordering $\Cov_t(\mathcal{A})=\{z_1,\ldots,z_N\}$ and fix any
$i\in[\mathsf{R}]$.  At the end of the stream, Algorithm~\ref{alg:hashing}
maintains
\begin{align*}
\mathcal{X}[i]
&= \Cov_t(\mathcal{A}) \cap H[i]_{m[i]}^{-1}(0^{m[i]})
\end{align*}
Define the target interval
\begin{align*}
I_\good
&:= \left[\frac{N}{1+\varepsilon},\ (1+\varepsilon)N\right],
\end{align*}
and let $\bad_i$ be the event that $|\mathcal{X}[i]|\cdot 2^{m[i]}\notin I_\good$.

Let $\ell$ be the output length of the hash functions. For each
$j\in\{0,1,\ldots,\ell\}$, define
\begin{align*}
\Cnt{i}{j}
&:= \bigl|\Cov_t(\mathcal{A}) \cap H[i]_{j}^{-1}(0^{j})\bigr|,
\qquad
\mu_j := \frac{N}{2^j},
\end{align*}
and define events
\begin{align*}
B_{i,j}
&:= \bigl(\Cnt{i}{j}\le \thresh-1\bigr),
\\
L_{i,j}
&:= \left(\Cnt{i}{j}<\frac{\mu_j}{1+\varepsilon}\right),
\\
U_{i,j}
&:= \bigl(\Cnt{i}{j}>(1+\alpha)\mu_j\bigr).
\end{align*}
Also define
\begin{align*}
E_{i,j}
&:= \overline{B_{i,j-1}} \cap B_{i,j} \cap (L_{i,j}\cup U_{i,j}),
\end{align*}
with the convention that $B_{i,-1}$ is false (so $\overline{B_{i,-1}}$ is true).

For $\bad_i$ to occur, the algorithm must terminate at some level $j$ (captured
by $\overline{B_{i,j-1}}\cap B_{i,j}$) and the scaled count must fall outside
$I_\good$ (captured by $L_{i,j}\cup U_{i,j}$). Hence
\begin{align}
\label{eq:bad-ub}
\Pr[\bad_i]
\le
\Pr\!\left[\bigcup_{j=0}^{\ell} E_{i,j}\right].
\end{align}
We only obtain an upper bound (and not equality) because $I_\good$ has upper
endpoint $(1+\varepsilon)N$ whereas $U_{i,j}$ uses the factor
$1+\alpha \le 1+\varepsilon$.

Let $m^*$ be the smallest integer $j$ such that
$(1+\varepsilon)\mu_j \le \thresh-1.$
We will use the monotonicity (nesting) of the prefix slices:
\begin{align}
\label{eq:thresh_monotone}
B_{i,j} \Longrightarrow B_{i,j+1}
\qquad\text{for all } j\in\{0,\ldots,\ell-1\}.
\end{align}

\begin{claim}\label{lm:upperbound}
\begin{align*}
\Pr[\bad_i]
\le\ &\Pr[B_{i,m^*-3}] + \Pr[L_{i,m^*-2}]
\\
&\qquad + \Pr[L_{i,m^*-1}] + \Pr[L_{i,m^*}\cup U_{i,m^*}].
\end{align*}
\end{claim}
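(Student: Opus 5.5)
Starting from \eqref{eq:bad-ub}, I will split the union $\bigcup_{j=0}^{\ell}E_{i,j}$ into the levels $j\le m^*-3$, the two levels $j=m^*-2$ and $j=m^*-1$, and the levels $j\ge m^*$. Each group is bounded separately and the four resulting terms are combined with a union bound.

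\emph{Low levels, $j\le m^*-3$.} Each $E_{i,j}$ with $j\le m^*-3$ implies $B_{i,j}$. By the nesting property \eqref{eq:thresh_monotone}, $B_{i,j}$ implies $B_{i,m^*-3}$. So $\bigcup_{j\le m^*-3}E_{i,j}\subseteq B_{i,m^*-3}$, which gives the first term. If $m^*-3<0$ this union is empty; I treat $B_{i,j}$ for negative $j$ as false, so the term is harmless.

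\emph{Middle levels, $j=m^*-2,m^*-1$.} I will show that the upper-deviation part $U_{i,j}$ cannot occur together with $B_{i,j}$ at these levels, so that $E_{i,j}\subseteq L_{i,j}$. By minimality of $m^*$ we have $(1+\varepsilon)\mu_{m^*-1}>\thresh-1$. Hence for $j\le m^*-1$, $(1+\alpha)\mu_j\ge(1+\alpha)\mu_{m^*-1}$. This step is the main obstacle: the comparison gives $(1+\varepsilon)$, not $(1+\alpha)$. The fix is to use the factor-two gaps in $\mu$: $\mu_{m^*-2}=2\mu_{m^*-1}$ and $\mu_{m^*-1}=2\mu_{m^*}$. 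The event $U_{i,j}$ requires $\Cnt{i}{j}>(1+\alpha)\mu_j$, while $B_{i,j}$ requires $\Cnt{i}{j}\le\thresh-1<(1+\varepsilon)\mu_{m^*-1}$. For $j=m^*-2$ these are incompatible, because $(1+\alpha)\mu_{m^*-2}=2(1+\alpha)\mu_{m^*-1}\ge(1+\varepsilon)\mu_{m^*-1}$ (using $2(1+\alpha)\ge 1+\varepsilon$ for $\varepsilon\le1$). For $j=m^*-1$ the same direct argument does not work, since it needs $(1+\alpha)\mu_{m^*-1}\ge\thresh-1$. I would fall back to the definition of $m^*$ together with integrality: $\thresh-1<(1+\varepsilon)\mu_{m^*-1}$. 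If that still leaves a gap, I would absorb the extra case into the $U$-part of the level-$m^*$ term via the nesting argument. The point is that $\Cnt{i}{m^*-1}$ large makes $\overline{B_{i,m^*-1}}$ more likely, and $E_{i,m^*-1}\cap U_{i,m^*-1}$ can be charged to the event that the algorithm stops at level $m^*-1$ with a high count. I expect the intended argument is that $U$ at a stopping level below $m^*$ contradicts $B$, and I will verify the constants for both $j=m^*-2$ and $j=m^*-1$ before committing.

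\emph{High levels, $j\ge m^*$.} Each $E_{i,j}$ with $j\ge m^*+1$ implies $\overline{B_{i,j-1}}$. Since $j-1\ge m^*$, the contrapositive of \eqref{eq:thresh_monotone} gives $\overline{B_{i,m^*}}$. Now $\overline{B_{i,m^*}}$ means $\Cnt{i}{m^*}>\thresh-1\ge(1+\varepsilon)\mu_{m^*}\ge(1+\alpha)\mu_{m^*}$, so it lies inside $U_{i,m^*}$. The case $j=m^*$ is directly contained in $L_{i,m^*}\cup U_{i,m^*}$. Hence $\bigcup_{j\ge m^*}E_{i,j}\subseteq L_{i,m^*}\cup U_{i,m^*}$.

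Summing the four pieces via the union bound gives the claimed inequality.
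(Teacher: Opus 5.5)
Your handling of the levels $j\le m^*-3$, of $j=m^*-2$ (via $2(1+\alpha)\ge 1+\varepsilon$), and of $j\ge m^*$ is correct and matches the paper's O1 and O3. The gap is $j=m^*-1$, which you leave open, and none of your fallbacks closes it.

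The reason is that $B_{i,m^*-1}\cap U_{i,m^*-1}$ is genuinely nonempty in general. Minimality of $m^*$ only gives $(1+\varepsilon)\mu_{m^*-1}>\thresh-1$. But $(1+\alpha)/(1+\varepsilon)<1$, and $\mu_{m^*-1}$ can lie anywhere in $((\thresh-1)/(1+\varepsilon),\,2(\thresh-1)/(1+\varepsilon)]$. Concretely, take $\varepsilon=1/2$, so $\alpha=1/3$ and $\thresh-1=49.2$, and take $N=33\cdot 2^{5}$, so $\mu_{m^*-1}=33$. Consider an outcome with $\Cnt{i}{m^*-2}=60$, $\Cnt{i}{m^*-1}=46$, $\Cnt{i}{m^*}=17$. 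It lies in $E_{i,m^*-1}$, since $60>49.2\ge 46>(1+\alpha)\cdot 33=44$. Yet it lies in none of $B_{i,m^*-3}$, $L_{i,m^*-2}$, $L_{i,m^*-1}$, $L_{i,m^*}\cup U_{i,m^*}$, because $22<17<22$ fails at the upper end: $17$ is at most $(1+\alpha)\cdot 16.5=22$ and at least $16.5/(1+\varepsilon)=11$.

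Integrality does not change this. Nesting only gives $\Cnt{i}{m^*}\le\Cnt{i}{m^*-1}$, which is the wrong direction for charging a high count at level $m^*-1$ to $U_{i,m^*}$. So as long as you work with $E_{i,m^*-1}$ as defined, this step cannot be completed. The paper's O2 also just asserts $\thresh\le(1+\alpha)\mu_j$ for $j=m^*-1$; this does not follow from the definition of $m^*$, so your suspicion was well founded.

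The missing idea is to undo the relaxation from $1+\varepsilon$ to $1+\alpha$ at the low stopping levels and argue with $\bad_i$ itself. Suppose the algorithm stops at some $j\le m^*-1$, so $B_{i,j}$ holds. Since $\mu_j\ge\mu_{m^*-1}$,
\[
2^j\,\Cnt{i}{j}\;\le\;2^j(\thresh-1)\;<\;2^j(1+\varepsilon)\mu_j\;=\;(1+\varepsilon)N .
\]
So the estimate cannot exceed the upper endpoint of $I_{\good}$, and $\bad_i$ forces $2^j\Cnt{i}{j}<N/(1+\varepsilon)$, i.e.\ $L_{i,j}$. This treats $j=m^*-2$ and $j=m^*-1$ uniformly, so your doubling argument becomes unnecessary.

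The offending outcomes in $E_{i,m^*-1}\cap U_{i,m^*-1}$ have estimate in $((1+\alpha)N,(1+\varepsilon)N)$ and are therefore not bad. This is why the claim survives even though the route through \eqref{eq:bad-ub} breaks at this level. The weaker factor $1+\alpha$ is needed only at levels $j\ge m^*$, where you already use it correctly: $\overline{B_{i,m^*}}\subseteq U_{i,m^*}$, and an upper deviation at level $m^*$ by the factor $1+\varepsilon$ is contained in $U_{i,m^*}$.
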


\begin{proof}
We make three observations.

\noindent\textbf{O1.} For all $j\le m^*-3$, we have $\mu_j/(1+\varepsilon)\ge \thresh$.
Thus $B_{i,j}\cap U_{i,j}=\emptyset$ and $B_{i,j}\cap L_{i,j}=B_{i,j}$, so
\begin{align*}
\bigcup_{j=0}^{m^*-3} E_{i,j}
&\subseteq \bigcup_{j=0}^{m^*-3} \bigl(\overline{B_{i,j-1}}\cap B_{i,j}\bigr)
\\
&\subseteq \bigcup_{j=0}^{m^*-3} B_{i,j}
\subseteq B_{i,m^*-3},
\end{align*}
where the last containment uses~\eqref{eq:thresh_monotone}. Hence $
\Pr\!\left[\bigcup_{j=0}^{m^*-3} E_{i,j}\right]
\le \Pr[B_{i,m^*-3}]$

\noindent\textbf{O2.} For $j\in\{m^*-2,m^*-1\}$, we have $\thresh \le (1+\alpha)\mu_j$,
so $B_{i,j}\cap U_{i,j}=\emptyset$. Since $B_{i,j}\cap L_{i,j}\subseteq L_{i,j}$,
\begin{align*}
\bigcup_{j=m^*-2}^{m^*-1} E_{i,j}
&\subseteq L_{i,m^*-2}\cup L_{i,m^*-1},
\\
\Pr\!\left[\bigcup_{j=m^*-2}^{m^*-1} E_{i,j}\right]
&\le \Pr[L_{i,m^*-2}] + \Pr[L_{i,m^*-1}].
\end{align*}

\noindent\textbf{O3.} For all $j\ge m^*$, we have $\thresh \ge (1+\alpha)\mu_j$,
which implies $\overline{B_{i,j}}\subseteq U_{i,j}$. Using again that
$\overline{B_{i,j}}\subseteq \overline{B_{i,j-1}}$ by~\eqref{eq:thresh_monotone},
we obtain the containment
\begin{align*}
\bigcup_{j=m^*}^{\ell} E_{i,j}
&\subseteq \overline{B_{i,m^*}}
 \cup
\bigl(\overline{B_{i,m^*-1}} \cap B_{i,m^*} \cap (L_{i,m^*}\cup U_{i,m^*})\bigr)
\\
&\subseteq \overline{B_{i,m^*}} \cup L_{i,m^*} \cup U_{i,m^*}
\subseteq L_{i,m^*}\cup U_{i,m^*},
\end{align*}
and therefore
\begin{align*}
\Pr\!\left[\bigcup_{j=m^*}^{\ell} E_{i,j}\right]
\le \Pr[L_{i,m^*}\cup U_{i,m^*}].
\end{align*}

Combining O1--O3 with~\eqref{eq:bad-ub} proves the claim.
\end{proof}

\begin{claim}\label{lm:aux-bounds}
The following bounds hold:
\begin{align*}
\Pr[L_{i,m^*}\cup U_{i,m^*}] &\le \frac{1}{4.92}, \qquad &
\Pr[L_{i,m^*-1}] &\le \frac{1}{10.84},\\
\Pr[L_{i,m^*-2}] &\le \frac{1}{20.68}  & 
\Pr[B_{i,m^*-3}] &\le \frac{1}{62.5}.
\end{align*}
\end{claim}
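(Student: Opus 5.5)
The plan is to bound each of the four probabilities by a second-moment argument on $\Cnt{i}{j}$. For every $j$, write
\[
\Cnt{i}{j}=\sum_{k=1}^{N}\mathbbm{1}\bigl[H[i]_j(z_k)=0^j\bigr].
\]
Each indicator has mean $2^{-j}$. Since $H[i]$ is drawn from $\Hteop$, the prefix slice $H[i]_j$ is again 2-wise independent, so the indicators are pairwise independent. Hence $\mathbb{E}[\Cnt{i}{j}]=\mu_j$ and $\mathrm{Var}[\Cnt{i}{j}]=\mu_j(1-2^{-j})\le\mu_j$. No higher independence is available, so I will use only Chebyshev's inequality and its one-sided form, Cantelli's inequality $\Pr[Z-\mathbb{E}Z\le -a]\le \sigma^2/(\sigma^2+a^2)$ (equivalently $1/(1+a^2/\sigma^2)$).

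The first step is to pin down $\mu_{m^*}$ from the definition of $m^*$. Minimality of $m^*$ gives $(1+\varepsilon)\mu_{m^*}\le \thresh-1$ and $(1+\varepsilon)\mu_{m^*-1}>\thresh-1$. Together with $\mu_{m^*-1}=2\mu_{m^*}$ and $\thresh-1=9.84(1+\varepsilon^{-2})$, this yields
\[
\mu_{m^*}>\frac{9.84(1+\varepsilon^{-2})}{2(1+\varepsilon)},\qquad \mu_{m^*-1}=2\mu_{m^*},\qquad \mu_{m^*-2}=4\mu_{m^*},\qquad \mu_{m^*-3}=8\mu_{m^*}.
\]
Next, both $L_{i,j}$ and $U_{i,j}$ force a deviation of at least $\alpha\mu_j$ from the mean, because $1-\frac{1}{1+\varepsilon}=\alpha$. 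I then bound the four terms as follows.
\begin{itemize}
\item For $L_{i,m^*}\cup U_{i,m^*}$, the deviation is two-sided, so Chebyshev gives $\Pr\le 1/(\alpha^2\mu_{m^*})$.
\item For $L_{i,m^*-1}$ and $L_{i,m^*-2}$, the deviation is one-sided, so Cantelli gives $1/(1+\alpha^2\mu_j)$ with $\mu_j=2\mu_{m^*}$ and $4\mu_{m^*}$ respectively.
\item The target constants fit this pattern: if $\alpha^2\mu_{m^*}\ge 4.92$, then $\alpha^2\mu_{m^*-1}\ge 9.84$ and $\alpha^2\mu_{m^*-2}\ge 19.68$. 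These give exactly $1/4.92$, $1/10.84$ and $1/20.68$.
\item For $B_{i,m^*-3}$, the event $\Cnt{i}{m^*-3}\le \thresh-1$ is a lower deviation of size $a=\mu_{m^*-3}-(\thresh-1)$. Since $\mu_{m^*-3}=8\mu_{m^*}>4(\thresh-1)/(1+\varepsilon)$, $a$ is a constant fraction of the mean. Cantelli then gives $1/(1+a^2/\mu_{m^*-3})$, which I would check is at most $1/62.5$.
\end{itemize}

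The main obstacle is the constant bookkeeping: showing $\alpha^2\mu_{m^*}\ge 4.92$, i.e.\ $2(1+\varepsilon)^3\le 9.84(1+\varepsilon^2)\cdot\frac{1}{4.92}$ after substituting the lower bound on $\mu_{m^*}$, uniformly in $\varepsilon$. This expression behaves as $2/9.84=1/4.92$ when $\varepsilon\to0$, but it grows with $\varepsilon$. So I would first verify it in the small-$\varepsilon$ regime relevant to the algorithm. If it is not uniform over $(0,1)$, I would tighten it by keeping the exact variance $\mu_j(1-2^{-j})$, or by exploiting the slack between $U_{i,j}$ (factor $1+\alpha$) and the true upper endpoint $(1+\varepsilon)$ of $I_\good$. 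The $B_{i,m^*-3}$ bound needs the same care, since the ratio $(1+\varepsilon)/4$ governing $a/\mu_{m^*-3}$ degrades as $\varepsilon$ grows.
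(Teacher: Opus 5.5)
Your plan follows the route the paper's one-line proof indicates: Chebyshev for the two-sided event, and Paley--Zygmund (whose refined form is exactly your Cantelli bound) for the one-sided ones. However, the substitution step, which the paper also asserts without carrying out, is a genuine gap and cannot be closed.

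Minimality of $m^*$ gives only $\mu_{m^*}>4.92(1+\varepsilon^{-2})/(1+\varepsilon)$, i.e.\ $\alpha^2\mu_{m^*}>4.92\,(1+\varepsilon^2)/(1+\varepsilon)^3$. This is sharp: choose $N$ so that $N2^{-m^*}$ sits just above the lower end with $m^*$ large. Then the exact variance factor $1-2^{-m^*}$ is essentially $1$, so keeping it buys nothing. Since $(1+\varepsilon)^3>1+\varepsilon^2$ for every $\varepsilon>0$, your target $\alpha^2\mu_{m^*}\ge 4.92$ fails for \emph{all} $\varepsilon\in(0,1)$, not only in the large-$\varepsilon$ regime. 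At $\varepsilon=1$, Chebyshev yields only about $1/1.23$.

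The same factor is lost for $L_{i,m^*-1}$ and $L_{i,m^*-2}$. For $B_{i,m^*-3}$, Cantelli gives $\bigl(1+(\tfrac{3-\varepsilon}{4})^2\mu_{m^*-3}\bigr)^{-1}$ with only $\mu_{m^*-3}>39.36(1+\varepsilon^{-2})/(1+\varepsilon)$. This is weaker than $1/62.5$ once $\varepsilon$ exceeds about $0.47$, and is about $1/10.84$ at $\varepsilon=1$.

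Your other repair, the slack between $1+\alpha$ and $1+\varepsilon$, is a property of $I_{\good}$ and hence of $\bad_i$. The claim is about the fixed events $L_{i,j}$, $U_{i,j}$, $B_{i,j}$, so using that slack changes the statement rather than proving it.

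In fact the first bound is false for $\varepsilon$ near $1$, so no argument can rescue it. Take $\varepsilon=0.9$, so $\thresh-1\approx 21.99$, and $N=6\cdot 2^{k}$ with $k$ large. Then $m^*=k$, $\mu_{m^*}=6$, $\alpha=9/19$, and $L_{i,m^*}\cup U_{i,m^*}=\{\Cnt{i}{m^*}\le 3\}\cup\{\Cnt{i}{m^*}\ge 9\}$. With $t=n$, each $\Cov_n(\mathbf{a})$ is a singleton, so $\Cov_t(\mathcal{A})$ can be any $N$-subset of $\{0,1\}^n$. For a uniformly random $N$-subset and $n$ large, the count under all but a vanishing fraction of hash functions is hypergeometric and essentially $\mathrm{Poisson}(6)$. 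That gives probability about $0.151+0.153\approx 0.30>1/4.92\approx 0.203$, so by averaging some fixed stream violates the bound.

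The four constants are inherited from the ApproxMC analysis~\cite{CMV16,MA20}. There $\thresh=1+9.84(1+\frac{\varepsilon}{1+\varepsilon})(1+\frac{1}{\varepsilon})^2$, and $m^*$ is chosen so that $\mu_{m^*}\ge 4.92(1+\frac{1}{\varepsilon})^2$. Under those parameters your computation goes through verbatim: $\alpha^2\mu_{m^*}\ge 4.92$, $\alpha^2\mu_{m^*-1}\ge 9.84$, $\alpha^2\mu_{m^*-2}\ge 19.68$, and $\thresh-1\le 0.375\,\mu_{m^*-3}$ with $\mu_{m^*-3}\ge 157.44$ gives $1/(1+0.390625\cdot 157.44)=1/62.5$. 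With the paper's $\thresh$ and $m^*$, one must either adopt those parameters or prove weaker, $\varepsilon$-dependent constants and recheck that $\Pr[\bad_i]<1/2$ still holds.
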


\begin{proof}
We have
\begin{align*}
\Pr[B_{i,j}] &= \Pr\!\left[\Cnt{i}{j}\le \thresh-1\right],\\
\Pr[L_{i,j}] &= \Pr\!\left[\Cnt{i}{j}\le \frac{\mu_j}{1+\varepsilon}\right],\\
\Pr[L_{i,j}\cup U_{i,j}]
&=
\Pr\!\left[\bigl|\Cnt{i}{j}-\mu_j\bigr|\ge \alpha\,\mu_j\right].
\end{align*}
Substituting the values of $m^*$, $\thresh$, and $\mu_j$, and applying
Chebyshev's and Paley--Zygmund inequalities yields the stated bounds.
\end{proof}

By Claim~\ref{lm:upperbound} and Claim~\ref{lm:aux-bounds}, we obtain
$\Pr[\bad_i]\le p$ for a constant $p<1/2$.  Define
\begin{align*}
c_i &:= |\mathcal{X}[i]|\cdot 2^{m[i]}, \qquad
c := \mathrm{Median}\bigl(\{c_i\}_{i=1}^{\mathsf{R}}\bigr).
\end{align*}
Since the hash functions $H[1],\ldots,H[\mathsf{R}]$ are chosen independently,
the events $\bad_1,\ldots,\bad_{\mathsf{R}}$ are independent. A Chernoff bound
then implies that for $\mathsf{R}=35\log(1/\delta)$, the median $c$ lies in
$I_\good$ with probability at least $1-\delta$, completing the proof.
\end{proof}

\end{document}